\documentclass[twocolumn,superscriptaddress,aps,pra]{revtex4-2}

\usepackage{graphicx}
\usepackage{amsmath}
\usepackage{amsthm}
\usepackage{amssymb}
\usepackage{latexsym}
\usepackage{array}
\usepackage{hyperref}
\usepackage{float}
\usepackage{amsfonts}
\usepackage{dsfont}
\usepackage{mathrsfs}
\usepackage{verbatim}
\usepackage{bbold}
\usepackage[normalem]{ulem}
\usepackage{upgreek}
\usepackage{makecell}
\usepackage{adjustbox}
\usepackage{color}

\usepackage{bm}
\usepackage{times}

\newcommand{\bra}[1]{\left<#1\right|}
\newcommand{\ket}[1]{\left|#1\right>}
\newcommand{\abs}[1]{\left|#1\right|}

\newcommand{\braket}[2]{\left<{#1}|{#2}\right>}
\newcommand{\ketbra}[2]{\ket{#1}\!\!\bra{#2}}

\newtheorem{theorem}{Theorem}
\newtheorem{proposition}{Proposition}

\newtheorem{corollary}{Corollary}
\newtheorem{definition}{Definition}
\newtheorem{remark}{Remark}

\DeclareMathOperator{\Tr}{Tr}
\DeclareMathOperator{\Var}{Var}

\newcommand{\hD}{\hat{D}}
\newcommand{\ha}{\hat{a}}
\newcommand{\had}{\hat{a}^{\dagger}}
\newcommand{\hq}{\hat{q}}
\newcommand{\hp}{\hat{p}}

\newcommand{\cE}{\mathcal{E}}

\newcommand{\C}{\mathbb{C}}
\newcommand{\R}{\mathbb{R}}

\begin{document}

\title{Universality cost of non-Gaussian enhancement in continuous-variable quantum teleportation: A fidelity--deviation trade-off}

\author{Kyoungho~Cho}
\affiliation{Institute for Convergence Research and Education in Advanced Technology, Yonsei University, Seoul 03722, Republic of Korea}
\affiliation{Department of Statistics and Data Science, Yonsei University, Seoul 03722, Republic of Korea}

\author{Bongjune~Kim}\email{bongjunekim@jejunu.ac.kr}
\affiliation{Department of Physics, Jeju National University, Jeju 63243, Republic of Korea}

\author{Jeongho~Bang}\email{jbang@yonsei.ac.kr}
\affiliation{Institute for Convergence Research and Education in Advanced Technology, Yonsei University, Seoul 03722, Republic of Korea}
\affiliation{Department of Quantum Information, Yonsei University, Incheon 21983, Republic of Korea}

\date{\today}

\begin{abstract}
Continuous-variable (CV) quantum teleportation is usually benchmarked by average fidelity, but when the teleportation is repeatedly used within optical networks or measurement-based architectures, uniformity across the input ensemble becomes equally important. We analyze this issue using two complementary figures of merit: the average fidelity and the fidelity deviation, which quantifies the input dependence of the single-shot teleportation fidelity. We prove that any deterministic unity-gain teleportation channel that is displacement covariant has vanishing fidelity deviation for coherent-state benchmarking, irrespective of whether the shared entangled resource is Gaussian or non-Gaussian. Nonzero deviation therefore diagnoses covariance breaking rather than non-Gaussianity. We then show that when a protocol raises the average fidelity through input-selective conditioning, the deviation generically increases in tandem, giving a quantitative universality cost. As a concrete example, we study teleportation enhanced by the so-called measurement-based noiseless linear amplification, where a heralded filter acts on the Bell-measurement record. The resulting trade-off among average fidelity, fidelity deviation, and success probability shows that stronger filtering can improve the conditional fidelity only by concentrating the successful events in favored regions of phase space, thereby suppressing the success probability and reducing input uniformity. Our results provide an operational framework for distinguishing genuine channel improvement from selectivity-driven post-selected advantage and suggest that the probabilistic CV teleportation should be assessed with average quality, universality, and heralding rate treated on an equal footing.
\end{abstract}

\maketitle

\section{Introduction}\label{sec:intro}

Continuous-variable (CV) quantum teleportation provides a conceptually simple and experimentally mature route to transferring an unknown state by combining shared entanglement, a CV Bell-measurement, and classical feed-forward displacement~\cite{BraunsteinKimble1998,Furusawa1998}. In its canonical Braunstein--Kimble (BK) form, the protocol is deterministic and Gaussian when the shared resource is a two-mode squeezed vacuum (TMSV). This ``Gaussian determinism'' is not merely a technical convenience. Indeed, it is one reason why CV teleportation has become a standard primitive in optical quantum networks, entanglement swapping, and measurement-based architectures, where teleportation is repeatedly invoked as a wiring operation~\cite{Menicucci2006}.

The teleportation performance is most often evaluated by the average fidelity over an input ensemble (typically, coherent states drawn from a Gaussian prior), because the teleportation is intended to work on unknown inputs and must be assessed statistically. Yet, average fidelity alone is an incomplete descriptor when the teleportation is used as a repeatedly composed subroutine. In modular protocols, it matters not only ``how well'' the teleportation works on average, but also ``how evenly'' it performs across the relevant input set: a procedure that is excellent for a narrow subset of inputs but poor elsewhere can be brittle under composition, can amplify input-distribution mismatch, and can conceal failure modes that remain invisible to mean-only figures of merit. This operational viewpoint naturally promotes a two-parameter characterization in terms of the single-shot fidelity profile $f(\alpha)$ (for coherent input $\ket{\alpha}$) and its low-order statistics over the chosen ensemble.

A particularly direct quantifier of input-uniformity is the fidelity deviation $D$, defined as the standard deviation of $f(\alpha)$ over the benchmarking ensemble. The fidelity deviation was introduced in finite-dimensional teleportation as a diagnostic of ``universality'' beyond mean performance~\cite{Bang2012,BangRyuKaszlikowski2018}, and its CV counterpart has been investigated in the context of coherent-state teleportation benchmarks~\cite{Patra2022}. Conceptually, the pair $(F, D)$ disentangles two questions that are otherwise conflated: the mean performance $F$ answers ``how good on average,'' while $D$ answers ``how sensitive to the input.'' This separation becomes especially meaningful because many physically motivated ``improvements'' of teleportation are not uniform channel improvements but rather selective procedures that reshape the distribution of successful events.

This brings us to a recurring theme in CV quantum information: the role of non-Gaussianity. While Gaussian states and Gaussian operations enjoy an exceptional analytical and experimental tractability, they are also constrained by no-go theorems: Gaussian operations alone cannot distill Gaussian entanglement~\cite{EisertScheelPlenio2002,GiedkeCirac2002}, and they cannot implement genuine Gaussian error correction against Gaussian noise~\cite{NisetFiurasekCerf2009}. Consequently, essentially every long-distance or fault-tolerant CV blueprint must introduce non-Gaussian elements at some point. In teleportation, non-Gaussianity can enter through resource engineering (for example, photon-subtracted or otherwise de-Gaussified entanglement), which has long been known to enhance average fidelity in suitable regimes~\cite{Opatrny2000,DellAnno2007}. More recently, a complementary and experimentally powerful route has been developed, called measurement-based noiseless linear amplification (MB-NLA), where one filters and/or post-processes the continuous Bell-measurement outcomes to realize a heralded, effectively non-Gaussian transformation that can boost the teleportation efficacy~\cite{Chrzanowski2014,Zhao2023,Fiurasek2024}. These schemes are intrinsically probabilistic and introduce an additional operational axis---the success probability---which in practice controls the repetition overhead and the effective resource consumption.

The central observation of this work is that the non-Gaussian enhancement is not synonymous with uniform channel improvement. In fact, whether a given non-Gaussian ingredient preserves or breaks the fundamental symmetries of the BK wiring determines whether the fidelity enhancement comes ``for free'' or comes at a price in the universality. The deterministic BK-type teleportation implements a displacement-covariant channel: displacing the input simply displaces the output by the same amount. For coherent-state benchmarking, this has a striking implication: the displacement covariance forces the single-shot fidelity profile to be flat in $\alpha$, and therefore implies $D=0$ irrespective of whether the shared entangled resource is Gaussian or non-Gaussian. In other words, within the coherent-state benchmarking, nonzero fidelity deviation is not a generic signature of non-Gaussian resources; rather, it is a direct witness that the implemented (possibly, heralded) procedure breaks the displacement covariance---through gain mismatch, phase-sensitive imperfections, or, most prominently for our purposes, through non-Gaussian conditioning that accepts and rejects events in a manner correlated with the measurement record and hence with the input.

This symmetry perspective suggests a concrete and testable story for the ``non-Gaussian advantage'' reported in the heralded CV teleportation. A post-selected filter can increase the average fidelity $F$ by preferentially retaining events associated with ``easier'' regions of phase space, or equivalently, by suppressing those measurement outcomes that would otherwise produce large displacements and hence large reconstruction errors. Such selectivity inevitably makes the successful teleportation map less uniform across inputs, thereby increasing the fidelity deviation $D$. The resulting phenomenon is a universality cost: improvements in $F$ driven by covariance breaking are accompanied by a predictable growth of $D$, and must be interpreted jointly with the heralding probability.

In this paper, we develop this universality-cost viewpoint into a quantitative framework. We first formalize the figures of merit $(F, D)$ for coherent-state ensembles in both deterministic and probabilistic teleportation, including the natural success-probability reweighting in the heralded case. We then prove a covariance--universality theorem showing that any deterministic displacement-covariant teleportation channel yields $D=0$ for the coherent-state benchmarking, thereby identifying $D>0$ as a clean diagnostic of the covariance breaking. Building on this, we derive a perturbative trade-off relation for small protocol deformations: when $F$ is increased by introducing an input-selective component to the fidelity profile, the induced deviation scales linearly with the fidelity gain, $D \simeq c \Delta F$, where the coefficient $c$ is determined by the variance of the selectivity profile over the effective (possibly, post-selected) ensemble. Finally, we apply the framework to MB-NLA-enhanced teleportation~\cite{Chrzanowski2014,Zhao2023,Fiurasek2024}, mapping the operational trade-off surface in the $(F,D,P_{\mathrm{succ}})$ space and identifying the regimes in which the non-Gaussianity raises $F$ primarily by concentrating performance on a restricted subset of inputs, thereby enlarging $D$.

\section{Preliminaries: CV teleportation and figures of merit}\label{sec:setting}

\subsection{Teleportation as an effective channel: deterministic vs filtered}\label{subsec:channel}

We consider a single bosonic mode with annihilation operator $\ha$ and canonical quadratures
\begin{eqnarray}
\hq=\frac{1}{\sqrt{2}}\left(\ha+\had\right), \quad \hp=\frac{1}{i\sqrt{2}}\left(\ha-\had\right),
\end{eqnarray}
so that $[\hq, \hp]=i$ (we set $\hbar=1$). The phase-space displacements are implemented by
\begin{eqnarray}
\hD(\alpha)=\exp\left(\alpha \had-\alpha^{\ast}\ha\right), \quad (\alpha \in \C),
\end{eqnarray}
and coherent states are $\ket{\alpha}=\hD(\alpha)\ket{0}$.

The Braunstein--Kimble (BK) protocol teleports an unknown input state $\hat{\rho}_{\mathrm{in}}$ by combining a shared bipartite resource $\hat{\rho}_{AB}$, a continuous-variable (CV) Bell-measurement on Alice's side, and classical feed-forward displacement on Bob's side~\cite{BraunsteinKimble1998,Furusawa1998,BraunsteinvanLoock2005}. It is convenient to describe the protocol at the level of a quantum instrument indexed by the Bell-measurement outcome. Denoting the complex Bell record by $m \in \C$ (equivalently, a pair of real homodyne outcomes), which in the standard BK circuit packages the outcomes of two commuting quadratures (often written as $q_{-}$ and $p_{+}$) measured after a balanced beam splitter, we write the corresponding completely positive (but generally trace-decreasing) map as $\cE_{m}$. Operationally, $\cE_{m}$ includes both the measurement update and Bob's conditional displacement. The (unnormalized) conditional output is $\cE_{m}(\hat{\rho}_{\mathrm{in}})$, while the probability density of obtaining $m$ is
\begin{eqnarray}
p(m | \hat{\rho}_{\mathrm{in}}) = \Tr\left[\cE_{m}(\hat{\rho}_{\mathrm{in}})\right],
\quad 
\int_{\C} d^{2}m \, p(m | \hat{\rho}_{\mathrm{in}})=1,
\label{eq:m_density}
\end{eqnarray}
where $d^{2}m$ denotes the Lebesgue measure on $\C$.

{\em Deterministic (Gaussian) teleportation.}---In the usual deterministic setting, every Bell outcome is accepted and the unconditional output is obtained by averaging over $m$,
\begin{eqnarray}
\hat{\rho}_{\mathrm{out}}=\cE(\hat{\rho}_{\mathrm{in}}),  \quad  \cE:=\int_{\C} d^{2}m \, \cE_{m},
\label{eq:deterministic_map}
\end{eqnarray}
so that $\cE$ is completely positive and trace preserving (CPTP). For the canonical BK implementation with a two-mode squeezed vacuum (TMSV) resource and unity-gain feed-forward, $\cE$ is a phase-insensitive Gaussian channel. In particular, in the idealized lossless model it can be written in the additive-noise form~\cite{BraunsteinKimble1998,BraunsteinvanLoock2005,Weedbrook2012}
\begin{eqnarray}
\cE_{\nu}(\hat{\rho}) = \int_{\C}d^{2}\xi \, \frac{1}{\pi \nu} \exp\left(-\frac{\abs{\xi}^{2}}{\nu}\right)\hD(\xi)\hat{\rho}\hD(\xi)^{\dagger},
\label{eq:additive_noise}
\end{eqnarray}
where $\nu \ge 0$ quantifies the added noise (for an ideal BK channel with squeezing parameter $r$, one has $\nu=e^{-2r}$ in shot-noise units). Eq.~(\ref{eq:additive_noise}) makes explicit that the deterministic BK teleportation is a random displacement of the input with an isotropic Gaussian kernel.

Because coherent states are displaced vacuum states, Eq.~(\ref{eq:additive_noise}) immediately yields
\begin{eqnarray}
\cE_{\nu}\left(\ketbra{\alpha}{\alpha}\right) \! = \! \int_{\C}d^{2}\xi \, \frac{1}{\pi \nu}\exp\!\left(-\frac{\abs{\xi}^{2}}{\nu}\right)\ketbra{\alpha+\xi}{\alpha+\xi},
\end{eqnarray}
i.e., the output is a Gaussian mixture of the coherent states centered at $\alpha$. The corresponding fidelity is then independent of $\alpha$,
\begin{eqnarray}
f(\alpha) \! = \! \int_{\C}d^{2}\xi \, \frac{1}{\pi \nu}\exp\!\left(-\frac{\abs{\xi}^{2}}{\nu}\right)\abs{\braket{\alpha}{\alpha+\xi}}^{2} \! = \! \frac{1}{1+\nu},
\label{eq:f_additive_const}
\end{eqnarray}
which illustrates the displacement-covariant baseline $D=0$ for deterministic unity-gain teleportation.

{\em Conditional (post-selected) teleportation.}---Many non-Gaussian enhancements are heralded and therefore probabilistic, including photon-subtraction-based resource engineering~\cite{Opatrny2000,DellAnno2007} and measurement-based noiseless linear amplification (MB-NLA) acting on the Bell record~\cite{Chrzanowski2014,Zhao2023,Fiurasek2024}. A convenient abstraction, which will be central for our later analysis, is to introduce an acceptance function $w(m) \in [0,1]$ that specifies how the experiment post-selects the measurement outcomes. The corresponding success map is then
\begin{eqnarray}
\cE_{\mathrm{succ}}(\hat{\rho}_{\mathrm{in}}) := \int_{\C} d^{2}m \, w(m) \cE_{m}(\hat{\rho}_{\mathrm{in}}),
\label{eq:succ_map_def}
\end{eqnarray}
with success probability $P_{\mathrm{succ}}(\hat{\rho}_{\mathrm{in}})=\Tr[\cE_{\mathrm{succ}}(\hat{\rho}_{\mathrm{in}})]$. The conditional output state on successful runs is
\begin{eqnarray}
\hat{\rho}_{\mathrm{out}}^{(\mathrm{succ})}(\hat{\rho}_{\mathrm{in}}) := \frac{\cE_{\mathrm{succ}}(\hat{\rho}_{\mathrm{in}})}{P_{\mathrm{succ}}(\hat{\rho}_{\mathrm{in}})}.
\label{eq:cond_output_def}
\end{eqnarray}
The deterministic protocol is recovered by setting $w(m) \equiv 1$, in which case $\cE_{\mathrm{succ}}=\cE$ and $P_{\mathrm{succ}}(\hat{\rho}_{\mathrm{in}})=1$ for all inputs. When $w(m)$ is nontrivial, however, the accepted ensemble is biased toward particular Bell outcomes, and this bias typically correlates with the input displacement. This is the structural origin of the increased input dependence (large fidelity deviation) in many conditional non-Gaussian schemes.

\subsection{Single-shot fidelity, average fidelity, and fidelity deviation}\label{subsec:figures}

For a pure target state $\ket{\psi}$ and an output state $\hat{\rho}$, the fidelity reduces to the overlap $\bra{\psi}\hat{\rho}\ket{\psi}$. Accordingly, for the coherent-state teleportation, we define the single-shot fidelity profile as
\begin{eqnarray}
f(\alpha) := \bra{\alpha} \cE\left(\ketbra{\alpha}{\alpha}\right) \ket{\alpha}.
\label{eq:single_fidelity}
\end{eqnarray}
To quantify the performance over a family of inputs, we work with an explicitly specified input ensemble and define the average fidelity $F$ and the fidelity deviation $D$ as the first two moments of $f(\alpha)$,
\begin{eqnarray}
F &:=& \int_{\C}d^{2}\alpha \, p_{\sigma}(\alpha) f(\alpha), \nonumber \\
D &:=& \sqrt{\int_{\C}d^{2}\alpha \, p_{\sigma}(\alpha) f(\alpha)^{2}-F^{2}}.
\label{eq:FD_det}
\end{eqnarray}
The quantity $F$ captures ``how well on average'' the protocol performs on the prescribed ensemble, whereas $D$ captures ``how uniformly'' it performs across that ensemble~\cite{BangRyuKaszlikowski2018,Patra2022}. In particular, $D=0$ holds if and only if the fidelity profile is constant on the support of $p_{\sigma}$, while large $D$ signals strong input selectivity.

In probabilistic protocols, the operational figure of merit is the conditional fidelity on successful runs. For a coherent input $\ket{\alpha}$, we introduce the success probability
\begin{eqnarray}
P_{\mathrm{succ}}(\alpha) := \Tr\left[\cE_{\mathrm{succ}}\left(\ketbra{\alpha}{\alpha}\right)\right],
\end{eqnarray}
and the conditional fidelity
\begin{eqnarray}
f_{\mathrm{succ}}(\alpha) &:=& \bra{\alpha}\hat{\rho}_{\mathrm{out}}^{(\mathrm{succ})}(\ketbra{\alpha}{\alpha})\ket{\alpha} \nonumber \\
	&=& \frac{1}{P_{\mathrm{succ}}(\alpha)}\bra{\alpha}\cE_{\mathrm{succ}}\left(\ketbra{\alpha}{\alpha}\right)\ket{\alpha}.
\label{eq:single_fidelity_cond}
\end{eqnarray}
Because the protocol is evaluated only on accepted trials, the relevant input distribution is reweighted by $P_{\mathrm{succ}}(\alpha)$. We therefore define the ensemble success probability and the conditional moments as
\begin{eqnarray}
&& F := \frac{1}{P_{\mathrm{succ}}}\int_{\C}d^{2}\alpha \, p_{\sigma}(\alpha) P_{\mathrm{succ}}(\alpha) f_{\mathrm{succ}}(\alpha), \nonumber \\
&& D := \sqrt{\frac{1}{P_{\mathrm{succ}}}\int_{\C}d^{2}\alpha \, p_{\sigma}(\alpha) P_{\mathrm{succ}}(\alpha) f_{\mathrm{succ}}(\alpha)^{2}-F^{2}}, \nonumber \\
&& P_{\mathrm{succ}} := \int_{\C}d^{2}\alpha \, p_{\sigma}(\alpha) P_{\mathrm{succ}}(\alpha).
\label{eq:PDF_condi}
\end{eqnarray}

\subsection{Input ensemble and regularization}\label{subsec:prior}

Unlike the finite-dimensional teleportation, CV teleportation does not admit a uniform average over ``all pure inputs'' because the Hilbert space is infinite dimensional and the natural phase-space volume is non-normalizable. Any meaningful performance assessment must therefore specify an input ensemble that regularizes the energy and reflects the intended use case~\cite{Patra2022}. Following common practice in CV benchmarking~\cite{Hammerer2005,Patra2022,Fiurasek2024}, we focus on coherent states with a Gaussian prior over displacements,
\begin{eqnarray}
p_{\sigma}(\alpha)=\frac{1}{\pi\sigma^{2}}\exp\left(-\frac{|\alpha|^{2}}{\sigma^{2}}\right),
\label{eq:prior}
\end{eqnarray}
where $\sigma^{2}=\int d^{2}\alpha p_{\sigma}(\alpha) \abs{\alpha}^{2}$ sets the typical input energy. This ensemble isolates the role of phase-space displacements while keeping the analysis analytically tractable and experimentally relevant; it is also the sharpest setting for diagnosing the loss of universality, because the deterministic unity-gain BK teleportation is displacement covariant and therefore produces an $\alpha$-independent fidelity profile.

Because $p_{\sigma}(\alpha)$ is phase invariant, any phase-insensitive quantity $g(\alpha)=g(\abs{\alpha})$ can be averaged by a one-dimensional radial integral. Writing $\alpha=re^{i\phi}$ with $r \ge 0$ and using $d^{2}\alpha=r dr d\phi$, one finds
\begin{eqnarray}
\int_{\C} d^{2}\alpha \, p_{\sigma}(\alpha) g(\abs{\alpha}) = \int_{0}^{\infty}dr \, \frac{2r}{\sigma^{2}}e^{-r^{2}/\sigma^{2}} g(r).
\label{eq:radial_reduction}
\end{eqnarray}
This reduction will be repeatedly used when we compute $(F,D)$ for concrete teleportation models. In particular, it provides a clean baseline: for deterministic displacement-covariant teleportation channels, $f(\alpha)$ is constant and hence $D=0$, whereas conditional filters $w(m)$ generically induce an $\alpha$-dependent fidelity profile, and hence $D>0$.

\section{Conceptual core: non-Gaussianity versus selectivity}\label{sec:core}

\subsection{Why ``non-Gaussianity'' is not the right explanatory variable}\label{subsec:ng_problem}

Non-Gaussian resources and operations are frequently advertised as ``the'' ingredient that unlocks the performance beyond Gaussian limits in continuous-variable (CV) information processing. This intuition is broadly correct at the level of feasibility---for instance, several no-go statements for entanglement distillation and error correction hold under Gaussian operations and Gaussian resources---but it becomes slippery once one asks a more input-resolved question. In our setting, the relevant object is not merely whether the resource state or the circuit is Gaussian, but how the implemented teleportation map weights different regions of the input phase space. The fidelity deviation $D$ introduced in Sec.~\ref{sec:setting} is precisely designed to be sensitive to such input dependence.

A first conceptual obstacle is that ``non-Gaussianity'' itself is not a single operational variable. One may speak of non-Gaussianity of (i) the shared resource state, (ii) the local operations in the teleportation circuit, or (iii) the induced effective channel conditioned on a heralding event. These notions are inequivalent and can even move in opposite directions under the same experimental tuning. More concretely, Patra \emph{et al.} systematically evaluated both the average fidelity and the fidelity deviation for a variety of non-Gaussian resources and input ensembles, and found parameter regimes where de-Gaussification improves the mean fidelity while the deviation does not increase, and can even decrease~\cite{Patra2022}. This already rules out any universal implication of the form ``non-Gaussianity $\Rightarrow D\uparrow$''.

The deeper reason is structural. The standard unity-gain Braunstein--Kimble (BK) teleportation circuit implements a map that is covariant under phase-space displacements when it is run deterministically (i.e., when all Bell outcomes are accepted)~\cite{BraunsteinKimble1998}. For an arbitrary deterministic displacement-covariant channel $\cE$,
\begin{eqnarray}
\cE\left(\hD(\alpha)\hat{\rho}\hD(\alpha)^{\dagger}\right)=\hD(\alpha)\cE(\hat{\rho})\hD(\alpha)^{\dagger},
\label{eq:disp_cov_core}
\end{eqnarray}
the coherent-state fidelity profile is necessarily flat:
\begin{eqnarray}
f(\alpha) = \bra{\alpha}\cE\left(\ketbra{\alpha}{\alpha}\right)\ket{\alpha} = \bra{0}\cE\left(\ketbra{0}{0}\right)\ket{0},
\label{eq:f_alpha_flat_core}
\end{eqnarray}
where we used $\ket{\alpha}=\hD(\alpha)\ket{0}$ and Eq.~(\ref{eq:disp_cov_core}). Hence, for the coherent-state prior in Eq.~(\ref{eq:prior}), any deterministic displacement-covariant teleportation has $D=0$ by definition, regardless of whether the underlying shared resource is Gaussian. In this sense, a deterministic protocol can be highly non-Gaussian at the resource level, yet perfectly ``non-selective'' on the coherent-state inputs.

This observation motivates a shift of emphasis. What changes $D$ in practice is not ``non-Gaussianity'' per se, but the mechanism by which a protocol breaks the displacement covariance and thereby makes the performance input-dependent. The most prominent route is conditioning: photon subtraction and related de-Gaussification steps on the resource~\cite{Opatrny2000,DellAnno2007}, as well as measurement-based noiseless linear amplification (MB-NLA) filters on the Bell record~\cite{Chrzanowski2014,Zhao2023,Fiurasek2024}, are implemented as heralded instruments rather than deterministic channels. In such schemes, the success probability $P_{\mathrm{succ}}(\alpha)$ and the conditional fidelity $f_{\mathrm{succ}}(\alpha)$ defined in Eq.~(\ref{eq:single_fidelity_cond}) and Eq.~(\ref{eq:PDF_condi}) become nontrivial functions of the input displacement. The conditional improvement is then typically achieved by accepting only a subset of measurement outcomes, which implicitly privileges some region of the input ensemble over the rest. We refer to this input-dependent bias as \emph{selectivity}, and we treat it as the operational mediator between conditional non-Gaussian enhancement and fidelity deviation.

\subsection{Selectivity indices and their relation to $D$}\label{subsec:selectivity_indices}

The definitions in Sec.~\ref{sec:setting} already contain a natural selectivity measure. For a deterministic protocol, the pair $(F,D)$ in Eq.~(\ref{eq:FD_det}) can be read as the mean and the dispersion of the fidelity landscape $f(\alpha)$ under the prior $p_{\sigma}(\alpha)$. In this language, $D$ quantifies how strongly the protocol ``prefers'' certain inputs over others, and $D=0$ is equivalent to perfect input-uniformity on the chosen ensemble.

For probabilistic protocols, the selectivity has two intertwined sources.

\begin{remark}[Two sources of selectivity in probabilistic teleportation]\label{rem:two_sources}
In a heralded protocol, input non-uniformity can arise (i) through the success-probability reweighting that defines the effective prior $p_{\sigma}^{(\mathrm{succ})}(\alpha)$, and (ii) through residual input dependence of the conditional fidelity profile $f_{\mathrm{succ}}(\alpha)$ under that effective prior. The fidelity deviation $D$ (equivalently the selectivity index $S$) captures their combined effect as the dispersion of $f_{\mathrm{succ}}(\alpha)$ with respect to $p_{\sigma}^{(\mathrm{succ})}$.
\end{remark}
 The success weighting induces an \emph{effective} input distribution for the successful events,
\begin{eqnarray}
p_{\sigma}^{(\mathrm{succ})}(\alpha) := \frac{p_{\sigma}(\alpha)P_{\mathrm{succ}}(\alpha)}{P_{\mathrm{succ}}},
\label{eq:effective_prior_core}
\end{eqnarray}
with $P_{\mathrm{succ}}$ in Eq.~(\ref{eq:PDF_condi}). Under this distribution, the conditional figures of merit take the standard mean--variance form
\begin{eqnarray}
&& F = \int_{\C}d^{2}\alpha \, p_{\sigma}^{(\mathrm{succ})}(\alpha) f_{\mathrm{succ}}(\alpha), \nonumber \\
&& D^{2} = \Var_{p_{\sigma}^{(\mathrm{succ})}}\bigl(f_{\mathrm{succ}}(\alpha)\bigr).
\label{eq:D_as_variance_core}
\end{eqnarray}
This motivates the following definition.
\begin{definition}[Selectivity index]\label{def:selectivity}
For a teleportation protocol evaluated on the coherent-state prior $p_{\sigma}$, we define the ``selectivity index'' as the dispersion of the (conditional) fidelity profile over the effective ensemble of successful events:
\begin{eqnarray}
S := \sqrt{\Var_{p_{\sigma}^{(\mathrm{succ})}}\big(f_{\mathrm{succ}}(\alpha)\big)}.
\label{eq:selectivity_S}
\end{eqnarray}
By Eq.~(\ref{eq:D_as_variance_core}), one has $S=D$ for probabilistic protocols, and $p_{\sigma}^{(\mathrm{succ})}=p_{\sigma}$ with $f_{\mathrm{succ}}=f$ for deterministic protocols.
\end{definition}

Interpreting $D$ as the selectivity does not change the mathematics, but it changes what we regard as the explanatory variable. The point is that the non-Gaussianity is a property of a state or an operation, whereas $S$ is an operational property of the input-response function of the entire protocol. In particular, a conditional non-Gaussian enhancement can increase $F$ by manipulating $f_{\mathrm{succ}}(\alpha)$ and/or $P_{\mathrm{succ}}(\alpha)$ so that the effective ensemble $p_{\sigma}^{(\mathrm{succ})}$ concentrates on a region where the protocol is accurate. The same manipulation generically makes the response less uniform, i.e., $S$ larger. The central thesis of this paper is therefore the causal chain
\begin{eqnarray}
&& \text{non-Gaussian conditioning} \;\Longrightarrow\; \text{selectivity ($S$)} \nonumber \\
&& \qquad \;\Longrightarrow\; \text{fidelity deviation ($D$)},
\label{eq:ng_to_selectivity_chain}
\end{eqnarray}
which can be investigated without committing to any particular intrinsic measure of non-Gaussianity.

While $S(=D)$ is a global dispersion, it is sometimes useful to probe the selectivity through alternative, more ``differential'' indicators of input sensitivity. When the protocol is phase insensitive so that $f_{\mathrm{succ}}(\alpha)=f_{\mathrm{succ}}(r)$ with $r:=\abs{\alpha}$, we consider two examples:
\begin{eqnarray}
S_{1} &:=& \int_{\C}d^{2}\alpha \, p_{\sigma}^{(\mathrm{succ})}(\alpha)\abs{\frac{\partial}{\partial r}f_{\mathrm{succ}}(r)}_{r=\abs{\alpha}}, \nonumber \\
S_{2} &:=& \Var_{p_{\sigma}^{(\mathrm{succ})}}\big(\log f_{\mathrm{succ}}(\alpha)\big).
\label{eq:S_def}
\end{eqnarray}
The index $S_{1}$ captures the typical radial slope of the fidelity landscape over the successful ensemble, and is therefore sensitive to how sharply the protocol ``turns off'' as the input amplitude grows. The index $S_{2}$ instead measures multiplicative variability: because $\log f$ turns relative changes into additive ones, $S_{2}$ highlights regimes where the protocol strongly amplifies or suppresses fidelity on a logarithmic scale. 

\section{Main theorems: covariance, selectivity, and local trade-offs}\label{sec:theorems}

The goal of this section is to turn the narrative of Sec.~\ref{sec:core} into the statements that are (i) model-independent and (ii) sharp enough to isolate what is genuinely non-trivial in our setting, namely the logical chain
\begin{eqnarray}
&& \text{(conditional improvement)} \Longrightarrow \text{(covariance breaking)} \nonumber \\
&& \qquad \Longrightarrow \text{(selectivity)} \Longrightarrow \text{(nonzero deviation)}.
\end{eqnarray}
The results below do not presume Gaussianity of the operations or the resource; rather, Gaussianity enters later only when we instantiate the abstract objects by the BK teleporter and its measurement-based non-Gaussian variants in Sec.~\ref{sec:mbnla}. Throughout, the input ensemble is the coherent-state Gaussian prior of Sec.~\ref{subsec:prior},
\begin{eqnarray}
\mathbb{E}_{\sigma}[g(\alpha)] &:=& \int_{\C} d^{2}\alpha \, p_{\sigma}(\alpha) g(\alpha), \nonumber \\
p_{\sigma}(\alpha) &=& \frac{1}{\pi\sigma^{2}}e^{-|\alpha|^{2}/\sigma^{2}}.
\label{eq:Esigma_def}
\end{eqnarray}
When the protocol is probabilistic, we additionally keep track of the heralding probability $P_{\mathrm{succ}}$ (Sec.~\ref{subsec:figures}). In that case, $F$ and $D$ are understood as the figures of merit conditioned on success in the operational sense.

\subsection{Covariance pins down universality: why deterministic BK yields $D=0$}\label{subsec:thm_covariance}

A deterministic unity-gain BK teleporter is displacement-covariant, hence it does not ``see'' the coherent-state label $\alpha$ as a distinguishable classical parameter. It treats the orbit $\{\ket{\alpha}\}_{\alpha \in \C}$ uniformly. The following theorem makes this intuition precise and simultaneously identifies the single structural mechanism that can produce a nonzero deviation in our benchmark ensemble.
\begin{theorem}[Covariance implies vanishing deviation on a group orbit]
\label{thm:covariance_implies_D0}
Let $G$ be a group with a unitary representation $g \mapsto \hat{U}_g$ on a Hilbert space $\mathcal{H}$. Fix a reference pure state $\ket{\psi_{0}} \in \mathcal{H}$ and consider its orbit
\begin{eqnarray}
\ket{\psi_{g}} := \hat{U}_g \ket{\psi_{0}}.
\end{eqnarray}
Let $\cE$ be a quantum channel (CPTP map) that is $G$-covariant in the sense that
\begin{eqnarray}
\cE\left(\hat{U}_g \hat{\rho} \hat{U}_g^{\dagger}\right) = \hat{U}_g \cE(\hat{\rho}) \hat{U}_g^{\dagger} \quad (\forall g \in G, \ \forall \hat{\rho}).
\label{eq:G_covariance}
\end{eqnarray}
Define the single-shot fidelity on the orbit by
\begin{eqnarray}
f(g) := \bra{\psi_{g}} \cE(\ketbra{\psi_{g}}{\psi_{g}}) \ket{\psi_{g}}.
\label{eq:f_on_orbit}
\end{eqnarray}
Then, $f(g)$ is constant on the orbit, i.e.,
\begin{eqnarray}
f(g) = \bra{\psi_{0}} \cE(\ketbra{\psi_{0}}{\psi_{0}}) \ket{\psi_{0}} = f(e) \quad (\forall g \in G).
\label{eq:f_constant_orbit}
\end{eqnarray}
Consequently, for any probability measure $\mu$ supported on the orbit, the fidelity deviation defined with respect to $\mu$ vanishes: namely, $D=0$.
\end{theorem}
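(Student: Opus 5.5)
The argument is a direct substitution of the covariance relation, Eq.~(\ref{eq:G_covariance}), into the definition of the orbit fidelity, Eq.~(\ref{eq:f_on_orbit}), followed by a one-line moment computation. It is the group-theoretic version of the flatness computation already done for displacements in Eq.~(\ref{eq:f_alpha_flat_core}).

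First, for fixed $g\in G$, I will write $\ketbra{\psi_g}{\psi_g}=\hat U_g\ketbra{\psi_0}{\psi_0}\hat U_g^{\dagger}$. Then I apply Eq.~(\ref{eq:G_covariance}) with $\hat\rho=\ketbra{\psi_0}{\psi_0}$ to get
\begin{eqnarray}
\cE(\ketbra{\psi_g}{\psi_g})=\hat U_g\,\cE(\ketbra{\psi_0}{\psi_0})\,\hat U_g^{\dagger}.
\end{eqnarray}
Sandwiching this between $\bra{\psi_g}=\bra{\psi_0}\hat U_g^{\dagger}$ and $\ket{\psi_g}=\hat U_g\ket{\psi_0}$, and using unitarity $\hat U_g^{\dagger}\hat U_g=\hat{\mathbb{1}}$, yields $f(g)=\bra{\psi_0}\cE(\ketbra{\psi_0}{\psi_0})\ket{\psi_0}$. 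Since $\hat U_e=\hat{\mathbb{1}}$, this common value is $f(e)$, which is Eq.~(\ref{eq:f_constant_orbit}).

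I will add a brief remark on well-definedness. If $\hat U_g\ket{\psi_0}=\hat U_{g'}\ket{\psi_0}$ up to a phase for different $g\neq g'$, the quantity $f$ depends only on the projector $\ketbra{\psi_g}{\psi_g}$, so it is a well-defined function on the orbit. This covers the displacement case, where $\hD(\alpha)$ forms only a projective representation. Phases cancel in $\hat U_g(\cdot)\hat U_g^{\dagger}$, so the argument goes through verbatim for projective representations, and I will state this explicitly so that the theorem applies to the Weyl–Heisenberg displacements.

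Second, the consequence for $D$ follows in one line. For any probability measure $\mu$ on the orbit, I set $c:=f(e)$. Then $F=\int d\mu\, f=c$ and $\int d\mu\, f^2=c^2$, so $D^2=c^2-c^2=0$, i.e., $D=0$. Measurability is automatic because $f$ is constant.

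There is no real obstacle. The only point needing care is the projective-phase issue just noted. Specializing to $G=\C$, $\hat U_\alpha=\hD(\alpha)$, $\ket{\psi_0}=\ket{0}$ and $\mu=p_\sigma$ recovers $D=0$ for deterministic unity-gain BK teleportation, consistent with Eq.~(\ref{eq:f_additive_const}).
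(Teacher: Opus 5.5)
Your proposal is correct and follows the paper's proof essentially verbatim: apply the covariance relation to $\ketbra{\psi_0}{\psi_0}$, sandwich between $\bra{\psi_0}\hat U_g^{\dagger}$ and $\hat U_g\ket{\psi_0}$, use unitarity to get $f(g)=f(e)$, and conclude $D=0$ because a constant has zero variance under any $\mu$. Your additional remark that the argument goes through unchanged for projective representations, since the phases cancel under conjugation, is a correct and useful clarification for the displacement case that the paper leaves implicit.
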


\begin{proof}---By covariance in Eq.~(\ref{eq:G_covariance}) applied to $\hat{\rho}_0=\ketbra{\psi_0}{\psi_0}$ we have
\begin{eqnarray}
\cE(\ketbra{\psi_{g}}{\psi_{g}}) = \cE\left(\hat{U}_g \hat{\rho}_0 \hat{U}_g^{\dagger}\right) = \hat{U}_g \cE(\hat{\rho}_0)\hat{U}_g^{\dagger}.
\end{eqnarray}
Substituting into Eq.~(\ref{eq:f_on_orbit}) and using $\ket{\psi_g}=\hat{U}_g\ket{\psi_0}$,
\begin{eqnarray}
f(g) &=& \bra{\psi_0}\hat{U}_g^{\dagger}\left(\hat{U}_g \cE(\hat{\rho}_0) \hat{U}_g^{\dagger}\right) \hat{U}_g\ket{\psi_0} \nonumber \\
	&=& \bra{\psi_0} \cE(\hat{\rho}_0) \ket{\psi_0} = f(e),
\end{eqnarray}
which proves Eq.~(\ref{eq:f_constant_orbit}). Since $f$ is constant on the support of $\mu$, its variance with respect to $\mu$ is zero; hence the fidelity deviation $D$ is equal to $0$.
\end{proof}

{\bf Theorem~\ref{thm:covariance_implies_D0}} is deliberately abstract: it does not rely on Gaussianity, squeezing, or any special structure beyond covariance. Its operational message is that the universality (i.e., $D=0$) is not a numerical accident but a symmetry statement: if the input ensemble is a group orbit and the effective teleportation channel respects the same group action, then the fidelity cannot depend on the orbit label.

\begin{corollary}[Displacement covariance $\Rightarrow$ $D=0$ for coherent-state ensembles]
\label{cor:displacement_covariance}
Let $G$ be the Weyl--Heisenberg displacement group with $\hat{U}_{\alpha}=\hat{D}(\alpha)$ and $|\psi_0\rangle=|0\rangle$, so that $|\psi_{\alpha}\rangle=|\alpha\rangle$. If a teleportation channel $\cE$ is displacement-covariant at unity gain,
\begin{eqnarray}
\cE\left(\hat{D}(\alpha)\hat{\rho}\hat{D}^{\dagger}(\alpha)\right) = \hat{D}(\alpha)\cE(\hat{\rho})\hat{D}^{\dagger}(\alpha),
\label{eq:displacement_covariance}
\end{eqnarray}
then the coherent-state fidelity $f(\alpha)=\bra{\alpha}\cE(\ketbra{\alpha}{\alpha})\ket{\alpha}$ is independent of $\alpha$, and therefore $D=0$ for any prior $p(\alpha)$, including the Gaussian prior in Eq.~(\ref{eq:Esigma_def}).
\end{corollary}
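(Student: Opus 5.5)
The plan is to obtain the corollary as a direct specialization of Theorem~\ref{thm:covariance_implies_D0}. I will take $G=\C$ under addition, $\hat{U}_{\alpha}=\hD(\alpha)$, and $\ket{\psi_0}=\ket{0}$, so that the orbit is $\{\ket{\alpha}\}_{\alpha\in\C}$.

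The first step is to check that the Weyl displacements furnish a valid (projective) unitary representation. They obey $\hD(\alpha)\hD(\beta)=e^{i\,\mathrm{Im}(\alpha\beta^{\ast})}\hD(\alpha+\beta)$, which is a genuine representation only up to a phase. I expect this to be the one point needing care, though it is not a real obstacle. The proof of Theorem~\ref{thm:covariance_implies_D0} never uses the homomorphism property $\hat{U}_g\hat{U}_h=\hat{U}_{gh}$. It only uses, for each fixed label, that $\hat{U}_g$ is unitary, that $\ket{\psi_g}=\hat{U}_g\ket{\psi_0}$, and that covariance holds for that $\hat{U}_g$. Phases also cancel in the conjugations $\hD(\alpha)\hat{\rho}\hD(\alpha)^{\dagger}$. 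So I will either note that the argument applies verbatim to any family of unitaries indexed by a set, or pass to the Heisenberg--Weyl group, with central phases that act trivially by conjugation. Either route makes Eq.~(\ref{eq:displacement_covariance}) exactly the hypothesis Eq.~(\ref{eq:G_covariance}).

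Next I will reproduce the computation explicitly, since it is one line:
\[
f(\alpha)=\bra{0}\hD(\alpha)^{\dagger}\,\hD(\alpha)\,\cE(\ketbra{0}{0})\,\hD(\alpha)^{\dagger}\,\hD(\alpha)\ket{0}=\bra{0}\cE(\ketbra{0}{0})\ket{0}.
\]
This uses $\ket{\alpha}=\hD(\alpha)\ket{0}$, covariance, and unitarity. The result is Eq.~(\ref{eq:f_alpha_flat_core}).

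Finally, because $f$ equals a constant $f_0$ for every $\alpha$, for any probability density $p(\alpha)$ the definitions in Eq.~(\ref{eq:FD_det}) give $F=f_0$ and $\int p f^2=f_0^2$. Hence $D=\sqrt{f_0^2-f_0^2}=0$, and in particular this holds for $p_{\sigma}$ in Eq.~(\ref{eq:Esigma_def}). No normalizability or moment issues arise, because $0\le f\le 1$ is bounded.
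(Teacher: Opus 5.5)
Your proposal is correct and matches the paper's proof, which simply specializes Theorem~\ref{thm:covariance_implies_D0} to $\hat{U}_{\alpha}=\hat{D}(\alpha)$ and the coherent-state orbit; your explicit one-line computation and the final variance step are the same argument written out. The paper passes silently over the fact that $\alpha\mapsto\hat{D}(\alpha)$ is only a projective representation of $\mathbb{C}$, and your handling of this point is a correct refinement: the theorem's proof never uses the homomorphism property, and phases cancel under conjugation.
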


\begin{proof}---This is {\bf Theorem~\ref{thm:covariance_implies_D0}} with $\hat{U}_g=\hat{D}(\alpha)$ and the coherent-state orbit.
\end{proof}

\begin{corollary}[Non-Gaussian resources do not imply nonzero deviation]\label{cor:NG_resource_not_imply_D}
If a CV teleportation implementation is deterministic and displacement-covariant at unity gain, then for coherent-state benchmarking the fidelity profile is flat and the deviation vanishes ($D=0$), irrespective of whether the shared entangled resource state is Gaussian or non-Gaussian.
\end{corollary}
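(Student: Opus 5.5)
The plan is to reduce Corollary~\ref{cor:NG_resource_not_imply_D} to Corollary~\ref{cor:displacement_covariance}. The only thing to check is that the resource state never enters the covariance property; it only enters the fixed ``reference'' output $\cE(\ketbra{0}{0})$. I will write the deterministic BK map with an arbitrary (possibly non-Gaussian) resource $\hat{\rho}_{AB}$ at the level of the instrument $\{\cE_m\}$ of Sec.~\ref{subsec:channel}. The standard way to do this is through the characteristic function: for unity gain, the output characteristic function factorizes as
\[
\chi_{\mathrm{out}}(\xi)=\chi_{\mathrm{in}}(\xi)\,\chi_{AB}(\xi^{\ast},\xi).
\]
Equivalently, in the operator form
\[
\cE(\hat{\rho})=\int d^{2}\xi\,K(\xi)\,\hD(\xi)\hat{\rho}\hD(\xi)^{\dagger}
\]
for the operator-valued case, or more generally as a map built only from Weyl operators acting on the input.

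First, I will show displacement covariance directly from the circuit structure. Replacing the input $\hat{\rho}$ by $\hD(\alpha)\hat{\rho}\hD(\alpha)^{\dagger}$ shifts the Bell record $m\mapsto m+\alpha$, since the balanced beam splitter maps an input displacement to a displacement of the measured quadratures $q_{-}$ and $p_{+}$. Under unity-gain feed-forward, Bob's conditional displacement $\hD(m)$ then picks up exactly the extra $\hD(\alpha)$. The Lebesgue measure $d^{2}m$ is translation invariant and every outcome is accepted ($w\equiv 1$), so the change of variables $m\to m-\alpha$ in Eq.~(\ref{eq:deterministic_map}) gives Eq.~(\ref{eq:displacement_covariance}). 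In characteristic-function language this is simply
\[
\chi_{\mathrm{in}}(\xi)\mapsto \chi_{\mathrm{in}}(\xi)\,e^{\alpha\xi^{\ast}-\alpha^{\ast}\xi},
\]
and the resource factor $\chi_{AB}$ is untouched. Second, with covariance established, Corollary~\ref{cor:displacement_covariance}, and hence Theorem~\ref{thm:covariance_implies_D0}, gives
\[
f(\alpha)=\bra{0}\cE(\ketbra{0}{0})\ket{0},
\]
a constant. The Gaussian prior of Eq.~(\ref{eq:prior}) is then a probability measure on the coherent-state orbit, so Eq.~(\ref{eq:FD_det}) yields $D=0$ and $F=f(0)$.

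The main obstacle is the first step: making the $m\mapsto m+\alpha$ bookkeeping rigorous for an arbitrary resource. This includes mixed or unbounded-energy states, and the fact that $\{\cE_m\}$ is an instrument whose outcomes are continuous. I would handle it at the level of characteristic functions, because that route shows most transparently that $\hat{\rho}_{AB}$ only multiplies by a fixed function $\chi_{AB}(\xi^{\ast},\xi)$, independent of the input. Once covariance holds, Gaussianity of $\hat{\rho}_{AB}$ is irrelevant: it affects only the value of the constant $f(0)$, never its independence of $\alpha$. If the statement is read with displacement covariance taken as a hypothesis, the corollary follows immediately from Corollary~\ref{cor:displacement_covariance}, and the remark about the resource simply notes that no hypothesis on $\hat{\rho}_{AB}$ was used there.
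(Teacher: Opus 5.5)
Your proposal is correct and matches the paper, whose entire proof is ``immediate from Corollary~\ref{cor:displacement_covariance}'': displacement covariance is a hypothesis of the statement, and no property of $\hat{\rho}_{AB}$ enters Theorem~\ref{thm:covariance_implies_D0}, exactly as in your closing remark. Your first step, deriving covariance of the unity-gain BK circuit for an arbitrary resource from $\chi_{\mathrm{out}}(\xi)=\chi_{\mathrm{in}}(\xi)\,\chi_{AB}(\xi^{\ast},\xi)$, is therefore not needed for the statement as written, but it is a correct supplement showing that non-Gaussian resources genuinely satisfy the hypothesis, which the paper only cites; moreover, $\chi_{AB}(\xi^{\ast},\xi)$ is the characteristic function of the commuting EPR quadratures of $\hat{\rho}_{AB}$, so your kernel $K$ is always a genuine probability density and your hedge about the operator form is unnecessary.
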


\begin{proof}---Immediate from {\bf Corollary~\ref{cor:displacement_covariance}}.
\end{proof}

In the BK setting, unity-gain deterministic teleportation is well known to reduce to an additive-noise channel that is displacement-covariant~\cite{BraunsteinKimble1998}. In our framework, this implies that for the coherent-Gaussian prior of Sec.~\ref{subsec:prior}, the deviation is pinned to $D=0$ at the deterministic baseline. Hence, any observation of $D>0$ under the same benchmarking convention must originate from a structural change that breaks covariance, such as a gain mismatch or, more importantly for this work, a conditional non-Gaussian filter acting on measurement outcomes~\cite{Fiurasek2024,Zhao2023}.

\subsection{Selective improvement forces dispersion: a local universality-cost bound}\label{subsec:thm_local_tradeoff}

{\bf Theorem~\ref{thm:covariance_implies_D0}} tells us when $D$ must vanish. The next question, which is the central quantitative point of this work, is how $D$ begins to grow once we depart from the covariant baseline in a direction that increases the average teleportation fidelity. Here, we formalize this as a small-parameter perturbation and show that, generically, any improvement in the average fidelity is accompanied by an increase in dispersion across the inputs, unless the improvement is perfectly uniform (which would require preserving covariance).

To avoid committing to a specific physical implementation at this stage, we phrase the statement directly in terms of the fidelity profile $f_{\theta}(\alpha)$, where $\theta$ is a parameter controlling the ``strength'' of a conditional non-Gaussian modification (e.g., filter gain or cut-off). The key assumption is that we recover the covariant baseline at $\theta=0$, hence $f_{\theta=0}(\alpha) \equiv f_0$ is constant.

\begin{theorem}[Local universality-cost relation]
\label{thm:local_universality_cost}
Let $p_{\sigma}(\alpha)$ be the Gaussian prior as in Eq.~(\ref{eq:Esigma_def}). Let $f_{\theta}(\alpha) \in [0,1]$ be a family of measurable fidelity profiles such that, for $\theta \to 0$,
\begin{eqnarray}
f_{\theta}(\alpha)= f_0 + \theta h(\alpha) + r_{\theta}(\alpha),
\label{eq:f_theta_expansion}
\end{eqnarray}
where $f_0$ is a constant, $h \in L^{2}(p_{\sigma})$, and the remainder satisfies $\| r_{\theta} \|_{L^{2}(p_{\sigma})}=o(\abs{\theta})$. Define the average fidelity and fidelity deviation (standard deviation) with respect to $p_{\sigma}$ by
\begin{eqnarray}
F(\theta) &:=& \mathbb{E}_{\sigma}[f_{\theta}(\alpha)], \nonumber \\
D(\theta) &:=& \sqrt{\mathbb{E}_{\sigma}[f_{\theta}(\alpha)^2]-F(\theta)^2}.
\label{eq:FD_theta_def}
\end{eqnarray}
Then, the following asymptotic expansions hold:
\begin{eqnarray}
F(\theta) &=& f_0 + \theta \mathbb{E}_{\sigma}[h(\alpha)] + o(\abs{\theta}),
\label{eq:F_linear} \\
D(\theta) &=& \abs{\theta} \sqrt{\Var_{\sigma}(h)} + o(\abs{\theta}),
\label{eq:D_linear}
\end{eqnarray}
where $\Var_{\sigma}(h):=\mathbb{E}_{\sigma}[h^2]-\mathbb{E}_{\sigma}[h]^2$. In particular, if $\mathbb{E}_{\sigma}[h]>0$ (mean fidelity improves to first order) and $h$ is not constant $p_{\sigma}$-almost surely (equivalently, $\Var_{\sigma}(h)>0$), then $D(\theta)>0$ for all sufficiently small $\theta \neq 0$.

Moreover, whenever $\mathbb{E}_{\sigma}[h]>0$, the ratio of deviation growth to mean improvement has a well-defined small-$\theta$ limit,
\begin{eqnarray}
\lim_{\theta\to 0^{+}} \frac{D(\theta)}{F(\theta)-f_0} = \frac{\sqrt{\Var_{\sigma}(h)}}{\mathbb{E}_{\sigma}[h]} =: c_{\mathrm{U}},
\label{eq:universality_cost_slope}
\end{eqnarray}
which we interpret as the ``local universality-cost coefficient''.
\end{theorem}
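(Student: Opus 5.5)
The plan is to work entirely in the Hilbert space $L^{2}(p_{\sigma})$ and treat $F$ and $D$ as a linear functional and a seminorm, respectively. Since $p_{\sigma}$ is a probability measure, $L^{2}(p_{\sigma})\subset L^{1}(p_{\sigma})$ and $|\mathbb{E}_{\sigma}[g]|\le \|g\|_{L^{2}(p_{\sigma})}$ by Cauchy--Schwarz. I will apply this to the remainder in Eq.~(\ref{eq:f_theta_expansion}). Taking expectations termwise and using linearity together with $\mathbb{E}_{\sigma}[f_{0}]=f_{0}$ gives
$F(\theta)=f_{0}+\theta\,\mathbb{E}_{\sigma}[h]+\mathbb{E}_{\sigma}[r_{\theta}]$.
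Since $|\mathbb{E}_{\sigma}[r_{\theta}]|\le\|r_{\theta}\|_{L^{2}(p_{\sigma})}=o(|\theta|)$, this is Eq.~(\ref{eq:F_linear}). The bound $f_{\theta}\in[0,1]$ guarantees that all moments are finite, but it is otherwise not needed.

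For the deviation, I will rewrite $D(\theta)$ as the $L^{2}(p_{\sigma})$ norm of the centered profile, $D(\theta)=\|f_{\theta}-\mathbb{E}_{\sigma}[f_{\theta}]\|_{L^{2}(p_{\sigma})}$. Writing $\Pi g:=g-\mathbb{E}_{\sigma}[g]$ for the centering map, which is an orthogonal projection and hence a contraction, we get $\Pi f_{0}=0$ and therefore $\Pi f_{\theta}=\theta\,\Pi h+\Pi r_{\theta}$. The triangle inequality for the norm then gives
$\bigl|D(\theta)-|\theta|\,\|\Pi h\|\bigr|\le\|\Pi r_{\theta}\|\le\|r_{\theta}\|=o(|\theta|)$.
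Since $\|\Pi h\|=\sqrt{\Var_{\sigma}(h)}$, this is exactly Eq.~(\ref{eq:D_linear}). This reverse-triangle step is the key move. A direct expansion of $\mathbb{E}[f^{2}]-F^{2}$ would produce cross terms $\theta\,\mathbb{E}[h r_{\theta}]$ of order $o(\theta^{2})$, and taking a square root of $\theta^{2}\Var(h)+o(\theta^{2})$ becomes delicate when $\Var(h)=0$. The norm formulation avoids that issue entirely, so I expect no real obstacle here.

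The positivity claim then follows directly. If $\Var_{\sigma}(h)>0$, then $D(\theta)\ge|\theta|\sqrt{\Var_{\sigma}(h)}-o(|\theta|)>0$ for all sufficiently small $\theta\neq0$. Note that this part does not even require $\mathbb{E}_{\sigma}[h]>0$.

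For the ratio limit, assume $\mathbb{E}_{\sigma}[h]>0$ and take $\theta\to0^{+}$. Then $F(\theta)-f_{0}=\theta\,(\mathbb{E}_{\sigma}[h]+o(1))$, which is strictly positive for small $\theta$, so the quotient is well defined. Dividing the two expansions, with $|\theta|=\theta$, gives
$D(\theta)/(F(\theta)-f_{0})=(\sqrt{\Var_{\sigma}(h)}+o(1))/(\mathbb{E}_{\sigma}[h]+o(1))\to c_{\mathrm{U}}$,
which is Eq.~(\ref{eq:universality_cost_slope}). The only point needing care is that the denominator is nonzero near $0^{+}$, and this is ensured by $\mathbb{E}_{\sigma}[h]>0$. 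Nothing specific to the Gaussian form of $p_{\sigma}$ is used anywhere in the argument, so the same proof applies to the effective prior $p_{\sigma}^{(\mathrm{succ})}$, provided that prior is $\theta$-independent or converges suitably.
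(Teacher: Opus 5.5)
Your proof is correct. The treatment of $F(\theta)$ (Cauchy--Schwarz on the remainder) and of the ratio limit matches the paper's, and you additionally make explicit that $F(\theta)-f_0>0$ for small $\theta>0$, which the paper leaves implicit.

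For the deviation you take a different route. The paper expands $\mathbb{E}_\sigma[f_\theta^2]$ and $F(\theta)^2$ term by term, subtracts to get $D(\theta)^2=\theta^2\Var_\sigma(h)+o(\theta^2)$, and then takes square roots. You instead write $D(\theta)=\|\Pi f_\theta\|_{L^2(p_\sigma)}$ with $\Pi$ the centering projection. You then get the order-$|\theta|$ statement directly from the reverse triangle inequality and $\|\Pi r_\theta\|\le\|r_\theta\|$.

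The gain from your formulation is that the constant $f_0$ is annihilated before any estimate is made, so no remainder bookkeeping is needed. In the paper's version, the term $2f_0\,\mathbb{E}_\sigma[r_\theta]$ appears in both $\mathbb{E}_\sigma[f_\theta^2]$ and $F(\theta)^2$. It is only $o(|\theta|)$, not $o(\theta^2)$ as the intermediate displays assert. It cancels exactly in the difference, so the final expansion $D^2=\theta^2\Var_\sigma(h)+2\theta\,\mathrm{Cov}_\sigma(h,r_\theta)+\Var_\sigma(r_\theta)=\theta^2\Var_\sigma(h)+o(\theta^2)$ is right. Your approach makes that cancellation automatic rather than something to be checked.

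Your remark that the square-root step is delicate when $\Var_\sigma(h)=0$ overstates the issue. The inequality $|\sqrt{a}-\sqrt{b}|\le\sqrt{|a-b|}$ turns $D^2=\theta^2\Var_\sigma(h)+o(\theta^2)$ into $D=|\theta|\sqrt{\Var_\sigma(h)}+o(|\theta|)$ in all cases, so the paper's last step is sound. The real advantage of your version is the cleaner handling of the remainder, together with your correct observations that nothing Gaussian about $p_\sigma$ is used and that positivity of $D$ needs only $\Var_\sigma(h)>0$.
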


\begin{proof}---From Eq.~(\ref{eq:f_theta_expansion}) and the linearity of $\mathbb{E}_{\sigma}$,
\begin{eqnarray}
F(\theta)=\mathbb{E}_{\sigma}[f_{\theta}] &=& \mathbb{E}_{\sigma}\left[ f_0 + \theta h + r_{\theta} \right] \nonumber \\
	&=& f_0 + \theta\mathbb{E}_{\sigma}[h] + \mathbb{E}_{\sigma}[r_{\theta}].
\end{eqnarray}
By Cauchy--Schwarz, $\abs{\mathbb{E}_{\sigma}[r_{\theta}]} \le \| r_{\theta} \|_{L^{2}(p_{\sigma})} = o(\abs{\theta})$, proving Eq.~(\ref{eq:F_linear}).

For the second moment,
\begin{eqnarray}
\mathbb{E}_{\sigma}[f_{\theta}^2] &=& \mathbb{E}_{\sigma}\!\left[(f_0+\theta h+r_{\theta})^2\right] \nonumber \\
	&=& f_0^2 + 2\theta f_0 \mathbb{E}_{\sigma}[h] + \theta^{2}\mathbb{E}_{\sigma}[h^2] + 2f_0\mathbb{E}_{\sigma}[r_{\theta}] \nonumber \\
	&& \quad +2\theta\mathbb{E}_{\sigma}[h r_{\theta}] + \mathbb{E}_{\sigma}[r_{\theta}^2].
\label{eq:second_moment_expand}
\end{eqnarray}
The remainder terms satisfy
\begin{eqnarray}
&& \mathbb{E}_{\sigma}[r_{\theta}] = o(\abs{\theta}), \nonumber \\[2pt]
&& \abs{\mathbb{E}_{\sigma}[h\,r_{\theta}]} \le \| h \|_{2} \| r_{\theta} \|_{2} = o(\abs{\theta}), \nonumber \\[2pt]
&& \mathbb{E}_{\sigma}[r_{\theta}^2] = \| r_{\theta} \|_{2}^2 = o(\theta^2).
\end{eqnarray}
Hence, Eq.~(\ref{eq:second_moment_expand}) becomes
\begin{eqnarray}
\mathbb{E}_{\sigma}[f_{\theta}^2] = f_0^2 +2\theta f_0 \mathbb{E}_{\sigma}[h] + \theta^{2}\mathbb{E}_{\sigma}[h^2] + o(\theta^2).
\label{eq:second_moment_asymp}
\end{eqnarray}
Similarly, from Eq.~(\ref{eq:F_linear}),
\begin{eqnarray}
F(\theta)^2 = f_0^2 +2\theta f_0 \mathbb{E}_{\sigma}[h] + \theta^2\mathbb{E}_{\sigma}[h]^2 + o(\theta^2).
\end{eqnarray}
Subtracting this from Eq.~(\ref{eq:second_moment_asymp}) yields
\begin{eqnarray}
D(\theta)^2 = \theta^2 \Var_{\sigma}(h) + o(\theta^2),
\end{eqnarray}
and taking the square roots, we obtain Eq.~(\ref{eq:D_linear}). Finally, using Eq.~(\ref{eq:F_linear}) in the form $F(\theta)-f_{0}=\theta\mathbb{E}_{\sigma}[h]+o(\abs{\theta})$ and dividing Eq.~(\ref{eq:D_linear}) by this expression, we obtain Eq.~(\ref{eq:universality_cost_slope}) in the limit $\theta \to 0^{+}$.
\end{proof}

{\bf Theorem~\ref{thm:local_universality_cost}} clarifies why the claim ``non-Gaussianity increases $D$'' is not the right mathematical target: the coefficient governing the deviation growth is not a generic non-Gaussianity measure but the variance of the first-order fidelity response $h(\alpha)$ under the benchmark prior. In particular, if a modification improves the mean fidelity through a response $h(\alpha)$ that is sharply peaked in $\abs{\alpha}$ (high selectivity), then $\Var_{\sigma}(h)$ is large, which forces a large $D$ already at small $\theta$. Conversely, if an improvement is uniform ($h(\alpha) \equiv \mathrm{const.}$), then $\Var_{\sigma}(h)=0$ and the protocol can in principle move along the $D=0$ axis, but this can occur only if the underlying symmetry (covariance) is preserved. This is precisely the sense in which selectivity is the operational mediator between the conditional improvement and the universality loss.

For later use, note that since our selectivity index in Sec.~\ref{sec:core} is essentially the variance of $f(\alpha)$, {\bf Theorem~\ref{thm:local_universality_cost}} also gives, to leading order,
\begin{eqnarray}
S(\theta)^{2}=D(\theta)^{2} = \theta^{2} \Var_{\sigma}(h) + o(\theta^{2}),
\end{eqnarray}
which makes the following causal chain explicit at the level of an expansion:
\begin{eqnarray}
\text{(covariance breaking)} \Longrightarrow S>0 \Longrightarrow D>0.
\end{eqnarray}

\subsection{Including heralding probability: a quality--throughput relation}\label{subsec:thm_triple_tradeoff}

A conditional protocol has a second axis of cost that a deterministic protocol does not, that is, the heralding probability $P_{\mathrm{succ}}$. In applications, the physically relevant question is often not only ``how large is $F$ on success?'' but also ``how frequently do we obtain a state that is reliably close to the target?'' This motivates treating $(F, D, P_{\mathrm{succ}})$ as a triple of figures of merit. The next theorem provides a minimal model-independent link between the three.

\begin{theorem}[Guaranteed high-fidelity success rate from $(F, D, P_{\mathrm{succ}})$]
\label{thm:throughput_bound}
Consider any probabilistic teleportation protocol with overall success probability $P_{\mathrm{succ}}$ under the benchmark experiment of Sec.~\ref{subsec:figures}. Let $X \in [0,1]$ denote the random fidelity on a successful run, where the randomness is over the input drawn from $p_{\sigma}$ and the internal measurement outcomes conditioned on the success. Let $F=\mathbb{E}[X]$ and $D=\sqrt{\Var(X)}$ be the conditional mean and deviation.

Then, for any $\delta>0$, one has the concentration bound
\begin{eqnarray}
\Pr\left[X \ge F-\delta \mid \mathrm{succ}\right] \ge 1-\frac{D^2}{\delta^2}.
\label{eq:chebyshev_conditional}
\end{eqnarray}
Equivalently, the per-trial probability of producing an output state whose fidelity is at least $F-\delta$ is bounded as
\begin{eqnarray}
\Pr\left[\mathrm{succ} \wedge (X \ge F-\delta)\right] \ge P_{\mathrm{succ}} \left(1-\frac{D^2}{\delta^2}\right).
\label{eq:chebyshev_unconditional}
\end{eqnarray}
\end{theorem}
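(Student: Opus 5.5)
The plan is to apply Chebyshev's inequality to the conditional distribution of $X$, and then multiply by the heralding probability using the chain rule for conditional probabilities.

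First I will fix the probability space. Define the joint law of (input $\alpha$, internal outcome, success flag) with $\alpha \sim p_{\sigma}$. Conditioned on the event $\mathrm{succ}$, which has probability $P_{\mathrm{succ}}>0$ by Eq.~(\ref{eq:PDF_condi}), the variable $X\in[0,1]$ is bounded. Hence all its moments exist, and $F=\mathbb{E}[X\mid\mathrm{succ}]$ and $D^{2}=\Var(X\mid\mathrm{succ})$ are finite. This is consistent with Eq.~(\ref{eq:D_as_variance_core}), since the reweighted prior $p_{\sigma}^{(\mathrm{succ})}$ of Eq.~(\ref{eq:effective_prior_core}) is exactly the conditional law of the input. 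One subtlety needs a remark: if $X$ is taken to be the outcome-resolved fidelity rather than $f_{\mathrm{succ}}(\alpha)$, its variance is at least that of $f_{\mathrm{succ}}(\alpha)$ by the law of total variance. The theorem is stated in terms of $D=\sqrt{\Var(X)}$, so the bound holds as written in either reading.

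Second, for $\delta>0$, I bound the complementary event:
\[
\Pr[X<F-\delta\mid\mathrm{succ}]\le\Pr[\,|X-F|\ge\delta\mid\mathrm{succ}]\le \frac{\mathbb{E}[(X-F)^{2}\mid\mathrm{succ}]}{\delta^{2}}=\frac{D^{2}}{\delta^{2}}.
\]
The second inequality is Markov's inequality applied to $(X-F)^{2}$ under the conditional measure. Taking complements gives Eq.~(\ref{eq:chebyshev_conditional}). The bound can be sharpened to the one-sided Cantelli form $1-D^{2}/(D^{2}+\delta^{2})$, and I would note this in passing, but the weaker form is what the statement asks for.

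Third, I write $\Pr[\mathrm{succ}\wedge(X\ge F-\delta)]=P_{\mathrm{succ}}\,\Pr[X\ge F-\delta\mid\mathrm{succ}]$ and insert the previous bound, which gives Eq.~(\ref{eq:chebyshev_unconditional}). When $\delta\le D$ the right-hand side is nonpositive and the bound holds trivially.

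There is no real obstacle in this argument. The only point requiring care is the first step: stating precisely that the expectation and variance are taken under the success-conditioned measure, so that Chebyshev's inequality is applied to the same $F$ and $D$ appearing in Eq.~(\ref{eq:PDF_condi}).
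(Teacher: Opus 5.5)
Your proof is correct and is the paper's argument: Chebyshev's inequality under the success-conditioned measure, the inclusion $\{X<F-\delta\}\subseteq\{\abs{X-F}\ge\delta\}$, and multiplication by $P_{\mathrm{succ}}=\Pr(\mathrm{succ})$. Your side remarks are also correct: the law-of-total-variance caveat for the case where $X$ is the outcome-resolved fidelity rather than $f_{\mathrm{succ}}(\alpha)$, and the one-sided Cantelli sharpening, which the paper itself uses later in Theorem~\ref{thm:Cantelli_Jlambda}.
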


\begin{proof}---Eq.~(\ref{eq:chebyshev_conditional}) is the Chebyshev inequality applied to $X$ (conditioned on success):
\begin{eqnarray}
\Pr\left(\abs{X-F} \ge \delta \mid \mathrm{succ}\right) \le \frac{D^2}{\delta^2}.
\end{eqnarray}
Since $\{X<F-\delta\} \subseteq \{\abs{X-F} \ge \delta\}$, we obtain
\begin{eqnarray}
\Pr\left(X<F-\delta \mid \mathrm{succ}\right) \le \frac{D^2}{\delta^2},
\end{eqnarray}
which is equivalent to Eq.~(\ref{eq:chebyshev_conditional}). Multiplying both sides by $\Pr(\mathrm{succ})=P_{\mathrm{succ}}$ yields Eq.~(\ref{eq:chebyshev_unconditional}).
\end{proof}

{\bf Theorem~\ref{thm:throughput_bound}} turns the dispersion $D$ into a throughput-relevant quantity. Even when $F$ is high, a large $D$ means that a non-negligible fraction of successful events can still have fidelity far below $F$, so the rate of producing reliably good outputs can be much smaller than the success rate itself. In the next Sec.~\ref{sec:mbnla}, we will show that measurement-based non-Gaussian filters can indeed raise $F$, but the resulting points in $(F, D, P_{\mathrm{succ}})$ space lie on a constrained Pareto-like surface: improving $F$ without paying in either $D$ (selectivity) or $P_{\mathrm{succ}}$ (heralding rate) is generically impossible, and the mechanism is precisely the covariance breaking identified in {\bf Theorem~\ref{thm:covariance_implies_D0}}.

\section{Worked example: MB-NLA-enhanced teleportation}\label{sec:mbnla}

We now instantiate our ``selectivity'' mechanism with a concrete protocol of CV teleportation enhanced by measurement-based noiseless linear amplification (MB-NLA)~\cite{RalphLund2009,Xiang2010,Chrzanowski2014,Zhao2023,Fiurasek2024}. The key idea is to herald improved performance by filtering the continuous Bell-measurement record, so that the conditional teleportation map deviates from the displacement-covariant BK channel. In the language developed above, MB-NLA is an ideal testbed for the implication
\begin{eqnarray}
&& \text{covariance breaking} \nonumber \\
&& \quad ~\Longrightarrow~ \text{selective $f_{\mathrm{succ}}(\alpha)$} ~\Longrightarrow~ D>0,
\end{eqnarray}
as well as for the local universality-cost relation of {\bf Theorem~\ref{thm:local_universality_cost}}.

\subsection{Model formulation: BK teleportation with an MB-NLA filter}

In the BK protocol, the Bell-measurement produces a complex record $m \in \C$ (equivalently, a pair of real quadrature outcomes), and the protocol, under deterministic unity-gain feed-forward, yields a displacement-covariant effective channel. As proven in {\bf Theorem~\ref{thm:covariance_implies_D0}}, this implies a perfectly flat fidelity profile for the coherent-state benchmarking, hence $D=0$ for the coherent-state Gaussian prior used here.

MB-NLA modifies this by introducing a radial acceptance weight on the Bell record. Following the analyses in Refs.~\cite{Chrzanowski2014,Zhao2023,Fiurasek2024}, we take the bounded filter
\begin{eqnarray}
w_{g,m_{c}}(m) = \Theta(m_{c} - \abs{m}) e^{\left(1-\frac{1}{g^{2}}\right)\left(\abs{m}^{2} - m_{c}^{2}\right)},
\label{eq:mbnla_filter}
\end{eqnarray}
where $g>1$ is the nominal amplification gain, $m_{c}>0$ is a hard cut-off, and $\Theta$ is the Heaviside step function~\cite{Zhao2023,Fiurasek2024,Shajilal2024}. The prefactor $e^{-(1-1/g^{2})m_{c}^{2}}$ ensures $0 \leq w_{g,m_{c}}(m) \leq 1$ so that $w_{g,m_{c}}$ can be interpreted as an accept probability~\footnote{The inverse-Gaussian dependence inside the cut-off is the measurement-based analogue of the unbounded NLA map; the hard cut-off is necessary for physical implementability and is the structural origin of non-Gaussianity and covariance breaking.}.

A fully microscopic derivation of the resulting conditional teleportation channel can be carried out in phase space~\cite{Fiurasek2024}. Here, we adopt a minimal correlated-Gaussian surrogate that isolates the selectivity mechanism while remaining analytically transparent and numerically stable. We model the Bell record as a noisy estimate of the unknown coherent displacement,
\begin{eqnarray}
m=\alpha+n,
\label{eq:m_model}
\end{eqnarray}
where $n$ is circular complex Gaussian noise with variance $V_{n}$, $p_{n}(n)=\frac{1}{\pi V_{n}}e^{-|n|^{2}/V_{n}}$. We further model the residual displacement error on the teleported output as a sum of a component correlated with $n$ and an independent component,
\begin{eqnarray}
\delta=\kappa n + \epsilon,
\label{eq:delta_model}
\end{eqnarray}
where $\epsilon$ is also circular complex Gaussian with variance $V_{\epsilon}$ and independent of $n$. The parameter $\kappa \in \R$ captures how much of the output error is predictable from the Bell records. In particular, in the deterministic unity-gain BK limit, the output state after feed-forward is independent of $m$, which corresponds to $\kappa=0$ in the present surrogate; in that case, any filtering of $m$ is necessarily futile because it cannot affect the conditional output. By contrast, MB-NLA protocols deliberately introduce a controlled correlation between the feed-forward correction and the Bell record so that heralding can sharpen the effective displacement estimate~\cite{Chrzanowski2014,Fiurasek2024}. This is precisely the regime $\kappa \neq 0$.

For a coherent input $\ket{\alpha}$, Eq.~(\ref{eq:m_model}) and Eq.~(\ref{eq:delta_model}) induce the following (unconditioned) output mixture,
\begin{eqnarray}
\hat{\rho}_{\mathrm{out}}(\alpha) = \int_{\C} d^{2}\delta \, p(\delta) \ketbra{\alpha+\delta}{\alpha+\delta},
\label{eq:mixture_model}
\end{eqnarray}
with $p(\delta)$ the complex Gaussian distribution implied by Eq.~(\ref{eq:delta_model}). Conditioning on the heralded success amounts to reweighting the $n$-distribution by the acceptance function $w_{g,m_{c}}(\alpha+n)$ and renormalizing, which produces a non-Gaussian effective map for any finite $m_{c}$.

\subsection{Single-shot conditional fidelity profile}

For a given input $\ket{\alpha}$, the success probability is
\begin{eqnarray}
P_{\mathrm{succ}}(\alpha) = \int_{\C} d^{2}n \, p_{n}(n) w_{g,m_{c}}(\alpha+n),
\label{eq:Psucc_alpha_mbnla}
\end{eqnarray}
and the conditional single-shot fidelity is $f_{\mathrm{succ}}(\alpha)=\bra{\alpha}\hat{\rho}_{\mathrm{out}}^{(\mathrm{succ})}(\alpha)\ket{\alpha}$ as defined in Eq.~(\ref{eq:single_fidelity_cond}).

The correlated-Gaussian surrogate yields a closed expression for $f_{\mathrm{succ}}(\alpha)$ as a ratio of two explicit integrals.
\begin{proposition}[Ratio formula for $f_{\mathrm{succ}}(\alpha)$]\label{prop:f_ratio_mbnla}
Under Eq.~(\ref{eq:m_model}), Eq.~(\ref{eq:delta_model}) and the MB-NLA filter in Eq.~(\ref{eq:mbnla_filter}), the conditional single-shot fidelity is
\begin{eqnarray}
f_{\mathrm{succ}}(\alpha) = \frac{\int_{\C} d^{2}n \, p_{n}(n) w_{g,m_{c}}(\alpha+n) e^{-\frac{\kappa^{2}}{1+V_{\epsilon}}\abs{n}^{2}}}{(1+V_{\epsilon}) \int_{\C} d^{2}n \, p_{n}(n) w_{g,m_{c}}(\alpha+n)}.
\label{eq:f_alpha_ratio}
\end{eqnarray}
In particular, when $w_{g,m_{c}} \equiv 1$ (deterministic teleportation without heralding),
\begin{eqnarray}
f_{\mathrm{succ}}(\alpha) = \frac{1}{1+V_{\epsilon}+\kappa^{2}V_{n}} \quad (\forall\alpha \in \C),
\label{eq:f0_surrogate}
\end{eqnarray}
so that $D=0$ for coherent-state benchmarking.
\end{proposition}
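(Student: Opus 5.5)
We condition on the Bell-noise variable $n$ and integrate out the independent output error $\epsilon$ first; the only nontrivial input is a coherent-state overlap integral. For a coherent input $\ket{\alpha}$ and fixed $n$, Eq.~(\ref{eq:delta_model}) says the output is the mixture $\int d^{2}\epsilon\, p_{\epsilon}(\epsilon)\ketbra{\alpha+\kappa n+\epsilon}{\alpha+\kappa n+\epsilon}$. Heralding multiplies the weight of this branch by $w_{g,m_c}(\alpha+n)$, since $m=\alpha+n$ by Eq.~(\ref{eq:m_model}). The unnormalized success output is therefore
\begin{eqnarray}
\cE_{\mathrm{succ}}(\ketbra{\alpha}{\alpha})=\int d^{2}n\,p_{n}(n)\,w_{g,m_c}(\alpha+n)\int d^{2}\epsilon\,p_{\epsilon}(\epsilon)\ketbra{\alpha+\kappa n+\epsilon}{\alpha+\kappa n+\epsilon}. \nonumber
\end{eqnarray}
Its trace is Eq.~(\ref{eq:Psucc_alpha_mbnla}), because coherent states are normalized and $p_\epsilon$ integrates to one. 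This gives the denominator of Eq.~(\ref{eq:f_alpha_ratio}), up to the factor $1+V_\epsilon$.

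Next we take the overlap with $\ket{\alpha}$. Using $\abs{\braket{\alpha}{\beta}}^{2}=e^{-\abs{\alpha-\beta}^{2}}$, the inner integrand becomes $e^{-\abs{\kappa n+\epsilon}^{2}}$. The key step is the Gaussian convolution identity for the $\epsilon$-average:
\begin{eqnarray}
\int d^{2}\epsilon\,\frac{1}{\pi V_\epsilon}e^{-\abs{\epsilon}^{2}/V_\epsilon}e^{-\abs{z+\epsilon}^{2}}=\frac{1}{1+V_\epsilon}\exp\!\left(-\frac{\abs{z}^{2}}{1+V_\epsilon}\right). \nonumber
\end{eqnarray}
We prove it by completing the square, or equivalently by noting that the convolution of two circular Gaussians with variances $1$ and $V_\epsilon$ has variance $1+V_\epsilon$, with the normalization tracked explicitly. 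Setting $z=\kappa n$ produces the factor $e^{-\kappa^{2}\abs{n}^{2}/(1+V_\epsilon)}/(1+V_\epsilon)$ inside the $n$-integral. Dividing by $P_{\mathrm{succ}}(\alpha)$, as in Eq.~(\ref{eq:single_fidelity_cond}), then yields Eq.~(\ref{eq:f_alpha_ratio}). Exchanging the order of integration is justified because every integrand is nonnegative and bounded (Tonelli).

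For the special case $w\equiv1$, the denominator integral equals $1$. The numerator is one more Gaussian integral:
\begin{eqnarray}
\int d^{2}n\,\frac{1}{\pi V_n}e^{-\abs{n}^{2}/V_n}e^{-a\abs{n}^{2}}=\frac{1}{1+aV_n},\qquad a=\frac{\kappa^{2}}{1+V_\epsilon}. \nonumber
\end{eqnarray}
Combining this with the prefactor gives $\bigl[(1+V_\epsilon)(1+aV_n)\bigr]^{-1}=(1+V_\epsilon+\kappa^{2}V_n)^{-1}$, which is independent of $\alpha$. Hence $f_{\mathrm{succ}}$ is constant, $P_{\mathrm{succ}}\equiv1$, and by Eq.~(\ref{eq:PDF_condi}) the variance vanishes, so $D=0$. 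This agrees with Theorem~\ref{thm:covariance_implies_D0}: the map is then the additive-noise channel of Eq.~(\ref{eq:additive_noise}) with $\nu=V_\epsilon+\kappa^{2}V_n$, consistent with Eq.~(\ref{eq:f_additive_const}).

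The only real obstacle is modeling rather than computation. We must state clearly that conditioning acts on $n$ through $m=\alpha+n$ while $\epsilon$ is independent of the herald, so that the filter reweights only the $n$-marginal. The Gaussian integrals themselves are routine.
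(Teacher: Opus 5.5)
Your proof is correct and follows essentially the same route as the paper. You condition on $n$, integrate out $\epsilon$ with the Gaussian convolution identity (the paper does the same computation via the shift $\epsilon\mapsto\epsilon-\kappa n$), then reweight by $p_n(n)\,w_{g,m_c}(\alpha+n)$ and normalize by $P_{\mathrm{succ}}(\alpha)$; the $w\equiv 1$ case then reduces to the same final Gaussian integral $\int d^{2}n\,p_n(n)e^{-a|n|^{2}}=(1+aV_n)^{-1}$.
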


\begin{proof}---Conditioned on a fixed record noise $n$, Eq.~(\ref{eq:delta_model}) implies that the output is a Gaussian mixture over $\epsilon$,
\begin{eqnarray}
\hat{\rho}(\alpha | n) \! = \! \int_{\C}d^{2}\epsilon \frac{1}{\pi V_{\epsilon}} e^{-\frac{\abs{\epsilon}^{2}}{V_{\epsilon}}} \ketbra{\alpha \! + \! \kappa n \! + \! \epsilon}{\alpha \! + \! \kappa n \! + \! \epsilon}.
\end{eqnarray}
Using the overlap $\abs{\braket{\alpha}{\alpha+\beta}}^{2}=\exp(-\abs{\beta}^{2})$ and shifting $\epsilon\mapsto \epsilon-\kappa n$ gives
\begin{eqnarray}
\bra{\alpha}\hat{\rho}(\alpha \mid n)\ket{\alpha} &=& \int_{\C}d^{2}\epsilon \, \frac{1}{\pi V_{\epsilon}} e^{-\frac{\abs{\epsilon-\kappa n}^{2}}{V_{\epsilon}}}e^{-\abs{\epsilon}^2} \nonumber \\
	&=& \frac{1}{1+V_{\epsilon}} e^{-\frac{\kappa^{2}}{1+V_{\epsilon}}|n|^{2}},
\label{eq:coherent_overlap_gauss}
\end{eqnarray}
where the last step is a standard Gaussian integral in $\C$. Averaging Eq.~(\ref{eq:coherent_overlap_gauss}) over $n$ with the success reweighting $p_{n}(n)w_{g,m_{c}}(\alpha+n)$ and normalizing by $P_{\mathrm{succ}}(\alpha)$ in Eq.~(\ref{eq:Psucc_alpha_mbnla}), we obtain Eq.~(\ref{eq:f_alpha_ratio}). Eq.~(\ref{eq:f0_surrogate}) follows by setting $w_{g,m_{c}} \equiv 1$ and evaluating $\int d^{2}n \, p_{n}(n) e^{-\frac{\kappa^{2}}{1+V_{\epsilon}}|n|^{2}} = \big(1+\frac{\kappa^{2}}{1+V_{\epsilon}}V_{n}\big)^{-1}$.
\end{proof}

Eq.~(\ref{eq:f_alpha_ratio}) makes the selectivity mechanism explicit. The denominator is the (unnormalized) probability of heralding success, whereas the numerator adds an extra Gaussian penalty that suppresses large record noise $\abs{n}$. Thus, whenever $\kappa \neq 0$, the filter preferentially accepts events with smaller $|n|$ and hence reduces the correlated part of the residual displacement error. However, the hard cut-off in $w_{g,m_{c}}(\alpha+n)$ couples this preference to the unknown displacement $\alpha$, so the improvement is not uniform over the ensemble.

Because the prior and the filter are rotationally symmetric, the fidelity profile depends only on the radial coordinate.

\begin{corollary}[Radial reduction]\label{cor:radial_reduction_mbnla}
If $p_{n}(n)$ is circular Gaussian and $w_{g,m_{c}}(m)$ depends only on $\abs{m}$, then $f_{\mathrm{succ}}(\alpha)=f_{\mathrm{succ}}(\abs{\alpha})$ for all $\alpha \in \C$.
\end{corollary}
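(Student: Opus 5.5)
The plan is to show that both integrals in the ratio formula of Proposition~\ref{prop:f_ratio_mbnla} are invariant under rotations $\alpha \mapsto e^{i\phi}\alpha$. The ratio $f_{\mathrm{succ}}(\alpha)$ is then invariant as well. Since every $\alpha$ with a given modulus $r=\abs{\alpha}$ can be written as $e^{i\phi}r$ for some phase $\phi$, a function invariant under all such rotations depends only on $\abs{\alpha}$.

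Concretely, I fix $\phi \in [0,2\pi)$ and write the numerator as
\begin{eqnarray}
N(\alpha) := \int_{\C} d^{2}n \, p_{n}(n)\, w(\alpha+n)\, e^{-\frac{\kappa^{2}}{1+V_{\epsilon}}\abs{n}^{2}}.
\end{eqnarray}
I then evaluate $N(e^{i\phi}\alpha)$ using the substitution $n = e^{i\phi} n'$. This is a rotation of $\R^{2}$, so it has unit Jacobian and $d^{2}n = d^{2}n'$. The three factors transform as follows:
\begin{itemize}
\item $p_{n}(e^{i\phi}n') = p_{n}(n')$, because $p_{n}$ depends only on $\abs{n}$;
\item the penalty factor $e^{-\kappa^{2}\abs{n}^{2}/(1+V_{\epsilon})}$ is likewise unchanged;
\item $w(e^{i\phi}\alpha + e^{i\phi}n') = w(e^{i\phi}(\alpha+n')) = w(\alpha+n')$, because by hypothesis $w$ depends only on $\abs{m}$.
\end{itemize}
Hence $N(e^{i\phi}\alpha)=N(\alpha)$. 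The denominator, which is $(1+V_{\epsilon})P_{\mathrm{succ}}(\alpha)$ from Eq.~(\ref{eq:Psucc_alpha_mbnla}), is the same integral without the penalty factor, so the identical argument shows it is rotation invariant too.

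To conclude, for arbitrary $\alpha = r e^{i\phi}$ the invariance gives $f_{\mathrm{succ}}(\alpha) = f_{\mathrm{succ}}(r)$, i.e.\ $f_{\mathrm{succ}}$ is a function of $\abs{\alpha}$ alone. The only real obstacle is making sure the ratio is well defined. This requires $P_{\mathrm{succ}}(\alpha)>0$, which holds because $p_{n}>0$ everywhere and $w_{g,m_{c}}>0$ on the open disk $\abs{m}<m_{c}$, a set of positive measure. We also need all integrals to be finite, which holds because $0\le w\le 1$ and the remaining factors are integrable Gaussians.

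As a by-product, the same change of variables shows that $P_{\mathrm{succ}}(\alpha)$ is radial. Consequently the effective prior $p_{\sigma}^{(\mathrm{succ})}$ is also radial, which justifies using the radial reduction of Eq.~(\ref{eq:radial_reduction}) for $(F,D,P_{\mathrm{succ}})$.
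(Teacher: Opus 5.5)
Your proposal is correct and is the paper's own argument: the rotation $n\mapsto e^{i\phi}n$ leaves $p_{n}$, the Gaussian penalty factor and $w_{g,m_{c}}$ invariant, so the numerator and denominator of the ratio formula are unchanged under $\alpha\mapsto e^{i\phi}\alpha$. Your check that $P_{\mathrm{succ}}(\alpha)>0$ and that all integrals are finite (so the ratio is well defined) is a useful detail the paper leaves implicit.
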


\begin{proof}---For any phase $\phi$, perform the change of variables $n \mapsto e^{i\phi}n$ in Eq.~(\ref{eq:f_alpha_ratio}). Since $p_{n}$ is circular and $w_{g,m_{c}}$ depends only on the magnitude, both the numerator and denominator are invariant under $\alpha \mapsto e^{i\phi}\alpha$, hence the ratio depends only on $\abs{\alpha}$.
\end{proof}

Most importantly, for our main narrative, a finite cut-off enforces selectivity in a quantitative way.

\begin{proposition}[Finite cut-off implies amplitude selectivity]\label{prop:cutoff_selectivity}
Assume $\kappa \neq 0$ and that the acceptance function satisfies $w(m)=0$ whenever $\abs{m} > m_{c}$ for some finite $m_{c}$. Then, for all $\abs{\alpha} \ge m_{c}$,
\begin{eqnarray}
f_{\mathrm{succ}}(\alpha) \le \frac{1}{1+V_{\epsilon}} e^{-\frac{\kappa^{2}}{1+V_{\epsilon}}(\abs{\alpha}-m_{c})^{2}}.
\label{eq:tail_bound_falpha}
\end{eqnarray}
Consequently, for any input ensemble with nonzero weight outside the disk $\abs{\alpha} \le m_{c}$ (in particular, for the Gaussian prior $p_{\sigma}$), the fidelity deviation satisfies $D>0$.
\end{proposition}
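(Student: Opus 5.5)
The plan is to bound the numerator of the ratio formula in Proposition~\ref{prop:f_ratio_mbnla} pointwise, using only the support of the filter. The same argument works for any $w$ vanishing outside the disk, not just $w_{g,m_c}$. Fix $\abs{\alpha}\ge m_{c}$. In both integrals of Eq.~(\ref{eq:f_alpha_ratio}), the integrand is nonzero only when $w(\alpha+n)\neq 0$, which forces $\abs{\alpha+n}\le m_{c}$. By the reverse triangle inequality this gives $\abs{n}\ge \abs{\alpha}-\abs{\alpha+n}\ge \abs{\alpha}-m_{c}\ge 0$. On the effective domain of integration we therefore have
\begin{eqnarray}
e^{-\frac{\kappa^{2}}{1+V_{\epsilon}}\abs{n}^{2}}\le e^{-\frac{\kappa^{2}}{1+V_{\epsilon}}(\abs{\alpha}-m_{c})^{2}},
\end{eqnarray}
since the function $x\mapsto e^{-cx^{2}}$ is decreasing on $x\ge 0$. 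Pulling this constant out of the numerator leaves exactly the denominator integral $P_{\mathrm{succ}}(\alpha)$, which cancels. This yields Eq.~(\ref{eq:tail_bound_falpha}).

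One technical point must be checked first: $P_{\mathrm{succ}}(\alpha)>0$, so that $f_{\mathrm{succ}}(\alpha)$ is well defined. I will assume, as holds for $w_{g,m_c}$, that $w>0$ on a set of positive Lebesgue measure, namely the open disk $\abs{m}<m_{c}$. Because $p_{n}$ is strictly positive everywhere, Eq.~(\ref{eq:Psucc_alpha_mbnla}) is then strictly positive for every $\alpha\in\C$.

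For the consequence $D>0$, I will use the mean--variance form in Eq.~(\ref{eq:D_as_variance_core}). It gives $D=0$ if and only if $f_{\mathrm{succ}}$ equals a constant $c$ almost everywhere with respect to $p_{\sigma}^{(\mathrm{succ})}$. The effective prior $p_{\sigma}P_{\mathrm{succ}}/P_{\mathrm{succ}}$ has full support on $\C$, because both factors are strictly positive.
\begin{itemize}
\item Lower bound on the constant: $c>0$. For every $\alpha$, the numerator integrand of Eq.~(\ref{eq:f_alpha_ratio}) is strictly positive on the positive-measure set $\{n:\abs{\alpha+n}<m_{c}\}$, so $f_{\mathrm{succ}}(\alpha)>0$ everywhere.
\item Upper bound on the constant: $c=0$. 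The bound (\ref{eq:tail_bound_falpha}) tends to $0$ as $\abs{\alpha}\to\infty$, using $\kappa\neq 0$. It holds on the region $\abs{\alpha}\ge m_c$, which has positive effective-prior mass, so $c$ must be at most $\inf_{\abs{\alpha}\ge R}$ of the bound for every $R$.
\end{itemize}
These two conclusions contradict each other, so $f_{\mathrm{succ}}$ is not constant and $D>0$.

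This argument deliberately avoids comparing values at specific points. Almost-everywhere constancy is handled by the positive effective measure of the sets $\{\abs{\alpha}\ge R\}$.

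The main obstacle is not the inequality itself, which is immediate, but making the $D>0$ step rigorous. It requires two facts: that the effective prior $p_{\sigma}^{(\mathrm{succ})}$ really charges the tail region (strict positivity of $P_{\mathrm{succ}}(\alpha)$ for all $\alpha$), and that $f_{\mathrm{succ}}$ stays bounded away from zero somewhere. Both rely on the nondegeneracy of $w$ inside the disk. I will state this nondegeneracy explicitly as a mild assumption, which $w_{g,m_c}$ satisfies.
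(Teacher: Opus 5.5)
Your derivation of the tail bound is the paper's own. You restrict to the support of $w(\alpha+n)$, apply the reverse triangle inequality to get $\abs{n}\ge\abs{\alpha}-m_{c}\ge 0$, bound the Gaussian penalty pointwise in the numerator of Eq.~(\ref{eq:f_alpha_ratio}), and cancel against $P_{\mathrm{succ}}(\alpha)$.

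The routes diverge at $D>0$. The paper argues informally that the ensemble charges two regions: one where $f_{\mathrm{succ}}$ is near its small-$\abs{\alpha}$ value, and one where the tail bound enforces a strictly smaller value. Making that rigorous would still require a positive lower bound on $f_{\mathrm{succ}}$ over a positive-measure set near the origin (e.g.\ continuity plus $f_{\mathrm{succ}}(0)>0$).

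Your contradiction argument avoids this. Pointwise positivity of $f_{\mathrm{succ}}$ forces any almost-sure constant $c$ to be positive. Meanwhile $\kappa\neq0$ and the positive effective mass of every set $\{\abs{\alpha}\ge R\}$ give $c\le\frac{1}{1+V_{\epsilon}}e^{-\frac{\kappa^{2}}{1+V_{\epsilon}}(R-m_{c})^{2}}\to0$. This needs no continuity. It also makes explicit the nondegeneracy of $w$ that the paper uses tacitly; without it $P_{\mathrm{succ}}(\alpha)$ could vanish and $f_{\mathrm{succ}}$ would be undefined.

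The cost is that the prior must have unbounded support. Since $P_{\mathrm{succ}}(\alpha)>0$ everywhere, this covers $p_{\sigma}$ and any other prior with unbounded support. That restriction is appropriate rather than a gap, because the literal clause ``any input ensemble with nonzero weight outside the disk'' is false. A point mass at some $\abs{\alpha_{0}}>m_{c}$ yields $D=0$. So does, for the radial filter $w_{g,m_{c}}$, the uniform measure on a circle $\abs{\alpha}=R>m_{c}$, by Corollary~\ref{cor:radial_reduction_mbnla}. The paper's proof gets around this only by silently adding its two-region hypothesis. Your argument proves exactly the Gaussian-prior case that the paper relies on.
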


\begin{proof}---On the support of $w(\alpha+n)$ for $\abs{\alpha} \ge m_{c}$, we necessarily have $\abs{\alpha+n} \le m_{c}$. By the triangle inequality, $\abs{n}=\abs{\alpha-(\alpha+n)}\ge \abs{\alpha} - \abs{\alpha+n} \ge \abs{\alpha} - m_{c}$, so $e^{-\frac{\kappa^{2}}{1+V_{\epsilon}}\abs{n}^{2}} \le e^{-\frac{\kappa^{2}}{1+V_{\epsilon}}(\abs{\alpha}-m_{c})^{2}}$. Applying this point-wise upper bound to the numerator of Eq.~(\ref{eq:f_alpha_ratio}), we get
\begin{eqnarray}
f_{\mathrm{succ}}(\alpha) \le \frac{e^{-\frac{\kappa^{2}}{1+V_{\epsilon}}(|\alpha|-m_{c})^{2}}}{1+V_{\epsilon}} \frac{\int d^{2}n \, p_{n}(n) w(\alpha+n)}{\int d^{2}n \, p_{n}(n) w(\alpha+n)},
\end{eqnarray}
which gives Eq.~(\ref{eq:tail_bound_falpha}). If the ensemble assigns nonzero probability to both a region where $f_{\mathrm{succ}}(\alpha)$ is close to its small-$\abs{\alpha}$ value and a region where Eq.~(\ref{eq:tail_bound_falpha}) enforces a strictly smaller value, then $f_{\mathrm{succ}}(\alpha)$ is not almost surely constant under the success-weighted measure, and therefore its variance is strictly positive, i.e., $D > 0$.
\end{proof}

{\bf Proposition~\ref{prop:cutoff_selectivity}} formalizes a statement often made qualitatively in the MB-NLA literature: a finite cut-off can enhance the fidelity only on a restricted amplitude range~\cite{Zhao2023,Fiurasek2024}. In our framework, this is precisely the emergence of selectivity and the consequent $D$ increase.

\subsection{\texorpdfstring{Trade-off diagrams in $(F,D,P_{\mathrm{succ}})$ and estimating the universality-cost coefficient}{Trade-off diagrams in (F,D,Psucc) and estimating the universality-cost coefficient}}

We now evaluate the conditional performance measures $(F, D, P_{\mathrm{succ}})$ defined in Eq.~(\ref{eq:PDF_condi}) by inserting the fidelity profile of Eq.~(\ref{eq:f_alpha_ratio}) and the success probability in Eq.~(\ref{eq:Psucc_alpha_mbnla}). Because of {\bf Corollary~\ref{cor:radial_reduction_mbnla}}, all ensemble integrals can be reduced to one-dimensional radial integrals (with the coherent-state Gaussian prior) without any loss of information. The numerical results are shown in Figs.~\ref{fig:falpha}--\ref{fig:FDPsucc_density}, where we fix the surrogate parameters $(V_{n},V_{\epsilon},\kappa,\sigma)=(0.5,0.1,0.6,2.0)$ and vary only the MB-NLA gain $g$ and cut-off $m_{c}$.

\begin{figure}[t]
\centering
\includegraphics[width=1.00\linewidth]{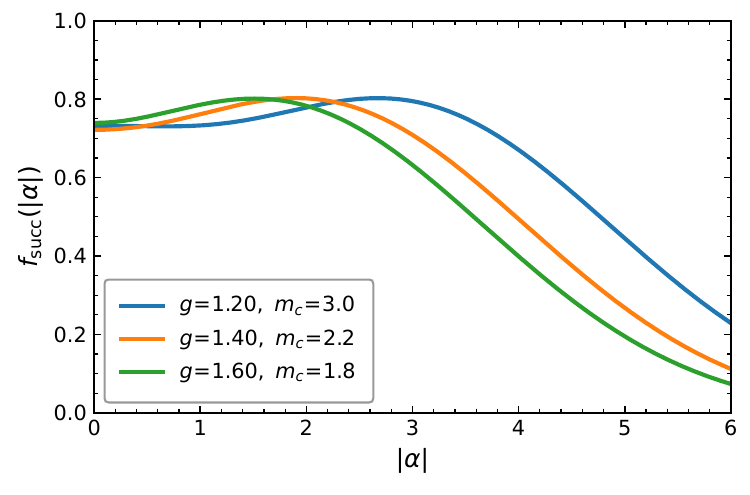}
\caption{Single-shot conditional fidelity profile $f_{\mathrm{succ}}(|\alpha|)$ under MB-NLA filtering for representative parameter choices.
A larger cut-off preserves a broader plateau and therefore keeps the protocol closer to uniform (smaller $D$), but it also leaves the scheme closer to the deterministic BK baseline.
A tighter cut-off can sharpen the near-origin performance, yet the improvement is necessarily localized in phase space and suppresses the large-displacement tail, as quantified by {\bf Proposition~\ref{prop:cutoff_selectivity}}.}
\label{fig:falpha}
\end{figure}

Fig.~\ref{fig:falpha} shows the single-shot conditional fidelity profile as a function of the input amplitude. The key feature is that MB-NLA does not lift the whole curve uniformly. Instead, stronger filtering redistributes the performance toward small and intermediate amplitudes. The mild setting $(g,m_{c})=(1.2, 3.0)$ keeps a broad, almost flat plateau and therefore remains relatively universal over the benchmark ensemble, whereas the more aggressive settings $(1.4, 2.2)$ and $(1.6, 1.8)$ sharpen the near-origin peak and force a faster decay at larger $\abs{\alpha}$. This is the selectivity mechanism already encoded in Eq.~(\ref{eq:f_alpha_ratio}): the heralding preferentially retains Bell records with smaller effective noise $\abs{n}$, but because the acceptance depends on $\abs{\alpha+n}$, the gain is concentrated in favored regions of phase space rather than distributed uniformly over all displacements. In this sense, Fig.~\ref{fig:falpha} is the most microscopic view of the universality cost: the enhanced fidelity near the origin is paid for by a compressed tail, and {\bf Proposition~\ref{prop:cutoff_selectivity}} guarantees that a finite cut-off ultimately turns this localization into a nonzero deviation.

\begin{figure}[t]
\centering
\includegraphics[width=1.00\linewidth]{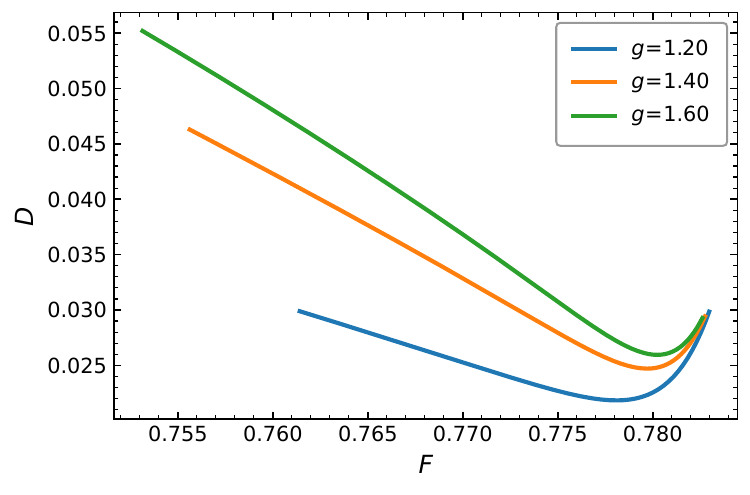}
\caption{Trade-off curves in the $(F,D)$ plane for MB-NLA-enhanced teleportation, obtained by sweeping the cut-off $m_{c}$ at fixed nominal gain $g$.
The rightmost tips are the weak-filter points closest to the deterministic baseline, while moving leftward and upward corresponds to stronger post-selection.
The slight upturn near the extreme right indicates that even the last incremental gain in $F$ can already be accompanied by an increase in $D$, signaling selectivity-driven improvement rather than a uniform channel enhancement.}
\label{fig:FD}
\end{figure}

Fig.~\ref{fig:FD} reorganizes the same data in the $(F, D)$ plane by sweeping $m_{c}$ at fixed $g$. The right-hand tip of each curve corresponds to weak filtering, where the protocol stays closest to the displacement-covariant BK baseline: $F$ is high and $D$ is comparatively small. To make this reference point explicit, note from Eq.~(\ref{eq:f0_surrogate}) that the no-heralding control $w \equiv 1$ gives a flat baseline
\begin{eqnarray}
f_{0}=\frac{1}{1+V_{\epsilon}+\kappa^{2}V_{n}}.
\end{eqnarray}
With the parameters used in Figs.~\ref{fig:falpha}--\ref{fig:FDPsucc_density}, namely $(V_{n},V_{\epsilon},\kappa,\sigma)=(0.5,0.1,0.6,2.0)$, this becomes
\begin{eqnarray}
f_{0}=\frac{1}{1+0.1+0.6^{2}\times 0.5}\approx 0.781.
\end{eqnarray}
The deterministic BK-type reference would therefore sit at $(F,D)=(f_{0},0)$, and the rightmost tips in Fig.~\ref{fig:FD} cluster around this value even though they still belong to fixed-$g$ heralded MB-NLA families and hence do not coincide exactly with the true deterministic baseline.

Moving leftward and upward along a curve tightens the post-selection, so the protocol becomes more selective and $D$ grows even when the average fidelity remains competitive. The separation between the three curves further shows that larger nominal gain permits stronger conditional enhancement but also makes the deviation rise more steeply. Near the high-$F$, low-$D$ branch, the trajectories are approximately linear, and the local slope is of order $c_{\mathrm{MB}} \sim 4$, which is consistent with the perturbative relation in Eq.~(\ref{eq:cMB_relation}) before the stronger-filter curvature sets in.

The small hook near the extreme right of each curve---where $F$ can increase slightly while $D$ also increases---is especially instructive. This behavior does not contradict {\bf Theorem~\ref{thm:local_universality_cost}}: the theorem is a local perturbative statement about the covariant baseline, whereas a finite sweep of $m_{c}$ within a fixed-$g$ heralded family can trace a mildly curved path in the $(F,D)$ plane once higher-order terms become visible. Operationally, the hook means that the last increment in average fidelity is already being purchased by extra selectivity. The successful events are concentrated a little more strongly on favored amplitudes, so the mean fidelity improves slightly, but the fidelity landscape becomes less uniform at the same time. In this sense, the upturn is not an anomaly but a direct visual signature of the universality cost discussed in Sec.~\ref{subsec:thm_local_tradeoff}.

\begin{figure}[t]
\centering
\includegraphics[width=1.00\linewidth]{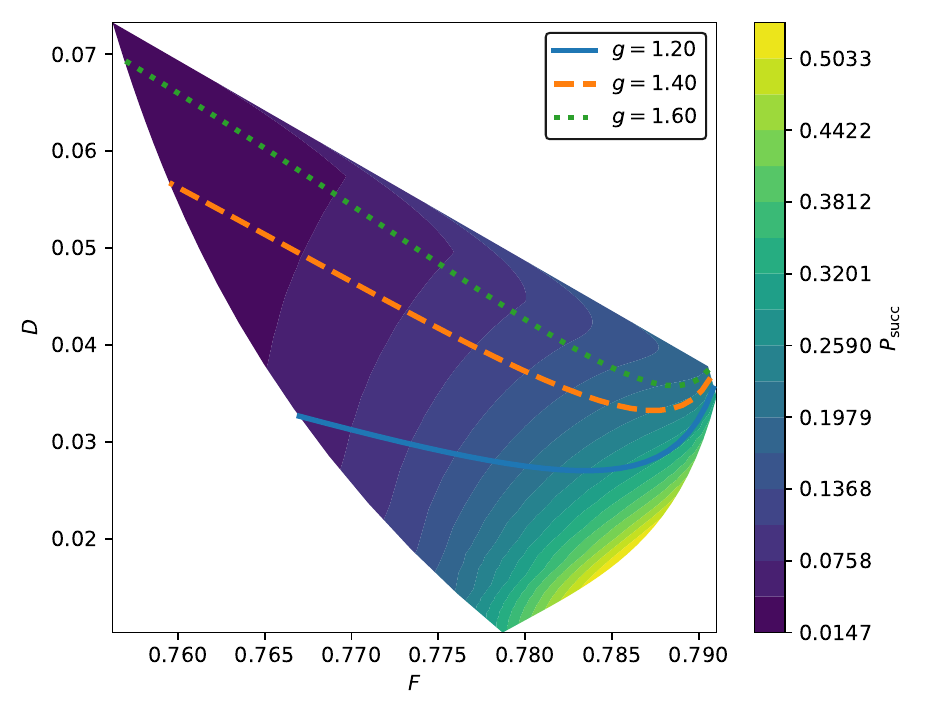}
\caption{Density plot of the joint trade-off in the $(F,D)$ plane for MB-NLA-enhanced teleportation, where color encodes $P_{\mathrm{succ}}$.
The bright lower-right region marks the operational sweet spot of simultaneously high fidelity, small deviation, and relatively high success probability.
Moving toward darker upper-right or upper-left regions reveals the cost of stronger filtering: the protocol can keep or even raise the conditional mean fidelity only by sacrificing universality, throughput, or both.}
\label{fig:FDPsucc_density}
\end{figure}

Fig.~\ref{fig:FDPsucc_density} is a density plot on the $(F, D)$ plane, with color encoding $P_{\mathrm{succ}}$. This presentation makes the three-way trade-off transparent. The lower-right region (high $F$, low $D$) is the operational sweet spot: the protocol is accurate on average, reasonably uniform, and the success probability is also highest. The upper-right region (high $F$, high $D$) is particularly important because it contains protocols that can still look attractive if one inspects only the mean fidelity, yet their good average performance is already achieved by favoring a narrower subset of inputs; the color scale shows that $P_{\mathrm{succ}}$ is already reduced there. The upper-left region (low $F$, high $D$) is the overfiltered regime, where both universality and throughput are poor and the success probability becomes smallest. Finally, the lower-left region (low $F$, low $D$) represents comparatively uniform but insufficiently corrected performance; here $P_{\mathrm{succ}}$ can be moderate, yet the fidelity gain is no longer substantial.

Read together with Fig.~\ref{fig:FD}, these trends show why the point of maximum $F$ is not automatically the best operating point. Once universality and throughput are accounted for, a point slightly away from the absolute maximum of $F$ can be preferable, especially if one optimizes a robustness-aware objective such as $J_{\lambda}=F-\lambda D$ or imposes explicit constraints such as $D \le D_{\max}$ and $P_{\mathrm{succ}} \ge P_{\min}$. The density plot therefore visualizes in a single panel the central practical lesson of this work: one cannot simultaneously maximize average fidelity, universality, and heralding rate, because pushing the filter harder inevitably moves the protocol away from the bright high-$F$/low-$D$ corner toward darker, more selective regions.

We now connect the worked example to the local universality-cost coefficient introduced in Sec.~\ref{subsec:thm_local_tradeoff}. In the weak-filter regime (large $m_{c}$ and $g \simeq 1$), the fidelity profile admits a small deformation $f_{\mathrm{succ}}(\alpha) = f_{0} + \theta h(\alpha) + O(\theta^{2})$ with a nontrivial selectivity profile $h$. {\bf Theorem~\ref{thm:local_universality_cost}} then predicts a linear relation
\begin{eqnarray}
D \simeq c_{\mathrm{MB}} [F-f_{0}],
\label{eq:cMB_relation}
\end{eqnarray}
where $c_{\mathrm{MB}}$ depends only on the first two moments of $h$ under the success-weighted input measure. Eq.~(\ref{eq:cMB_relation}) makes operational our central message: for MB-NLA-enhanced teleportation, a conditional fidelity increase is quantitatively accompanied by a predictable growth of input selectivity, unless the protocol remains effectively covariant on the support of the input ensemble.

\subsection{Resource non-Gaussianity vs filtered non-Gaussianity: where selectivity enters}\label{subsec:mbnla_vs_filter}

The control case above highlights that non-Gaussianity is not, by itself, an operational explanation of $D$. To reconcile this with the MB-NLA behavior in Sec.~\ref{sec:mbnla}, it is helpful to distinguish two qualitatively different ways in which non-Gaussianity appears in teleportation practice.

First, non-Gaussian resources are often prepared by a heralded operation (e.g., photon subtraction)~\cite{Walschaers2021,Thapliyal2024,Arora2025}, but the heralding occurs \emph{before} the unknown input is presented. Conditioned on successful preparation, the subsequent teleportation run is deterministic, and, crucially, the resource-preparation success probability is independent of the unknown displacement label $\alpha$. Such ``offline'' heralding therefore does not bias the input ensemble and does not introduce selectivity with respect to $\alpha$. Within the coherent-state benchmarking at unity gain, {\bf Corollary~\ref{cor:NG_resource_not_imply_D}} implies that the selectivity index remains $S=0$.

Second, MB-NLA is a filter on the Bell record that operates \emph{during} the teleportation and accepts the events depending on $m$, typically through a bounded, radially dependent weight $w(m)$ with an explicit cut-off~\cite{Chrzanowski2014,Zhao2023,Fiurasek2024}. Because the Bell record $m$ contains the information about the unknown displacement, such filtering correlates the success event with $\alpha$ and thereby reshapes the effective input distribution $p_{\sigma}^{(\mathrm{succ})}(\alpha)$ in Eq.~(\ref{eq:effective_prior_core}). This is precisely the origin of the selectivity in our framework. In MB-NLA, the improvement in $F$ is achieved in part by concentrating the successful events on a subset of inputs, and therefore a nonzero deviation is expected. Indeed, in Sec.~\ref{sec:theorems}, we proved that any nonuniform first-order improvement forces $D>0$ ({\bf Theorem~\ref{thm:local_universality_cost}}), and in Sec.~\ref{sec:mbnla}, we exhibited this mechanism explicitly through the amplitude-dependent conditional fidelity profile and the resulting trade-off geometry.

In summary, the control case validates the central claim of this paper: $D$ is not a generic marker of non-Gaussianity of the resource, but a marker of input selectivity induced by covariance breaking. Non-Gaussianity becomes relevant to $D$ precisely when it is deployed as an input-correlated filter (or, more generally, as a conditional transformation whose success probability depends on the measurement record that encodes the input label), which is the operative regime for MB-NLA and related post-selected enhancements.

\section{Non-Gaussianity--selectivity--universality: interpretation and design principles}\label{sec:design}

The results of Secs.~\ref{sec:theorems} and \ref{sec:mbnla} show that, for the coherent-state benchmarking, the fidelity deviation $D$ is fundamentally a symmetry diagnostic rather than a non-Gaussianity meter. The deterministic unity-gain BK teleportation is displacement covariant and therefore universal on the coherent-state orbit, forcing $D=0$ by {\bf Theorem~\ref{thm:covariance_implies_D0}} and {\bf Corollary~\ref{cor:displacement_covariance}}. By contrast, the post-selected enhancements, such as MB-NLA, alter the protocol into a filtering instrument, thereby correlating success with the Bell record and (through it) with the unknown displacement $\alpha$. 

This section provides two complementary layers of the interpretation. First, we give an information-theoretic view of why conditioning generically induces the input selectivity. Second, we translate the pair $(F, D)$ into design rules that explicitly balance mean performance against universality, thereby avoiding the pitfall of optimizing $F$ alone.

\subsection{Why conditional filters increase $D$: an information-theoretic view}\label{subsec:info_view}

In the conditional setting, the success event itself becomes a physical output of the device. Even if one ultimately cares only about the teleported quantum state, the binary record ``success/failure'' is observable, repeatable, and in practice often logged. Because the Bell record $m$ carries the information about the input displacement, any acceptance rule $w(m)$ that depends on $m$ makes the success flag statistically dependent on the unknown input label. In this sense, a conditional filter ``reveals more information'' about the input than a deterministic covariant teleporter. Thus, the success is no longer a neutral event but a Bayesian update about which inputs are more likely to have occurred~\cite{Chrzanowski2014,Fiurasek2024}.

To formalize this statement in a model-independent way, we treat the coherent amplitude $\alpha$ as a random variable distributed according to the prior $p_{\sigma}(\alpha)$, and we denote by $S \in \{0,1\}$ the success indicator of a heralded teleportation attempt.
By definition,
\begin{eqnarray}
&& \Pr(S=1 \mid \alpha) = P_{\mathrm{succ}}(\alpha), \nonumber \\
&& P_{\mathrm{succ}} := \Pr(S=1) = \int_{\C} d^2\alpha \, p_{\sigma}(\alpha) P_{\mathrm{succ}}(\alpha),
\label{eq:Ps_alpha_def_design}
\end{eqnarray}
where $P_{\mathrm{succ}}(\alpha)$ is the input-dependent success probability and $P_{\mathrm{succ}}$ is the ensemble success probability (cf.~Sec.~\ref{subsec:figures}). The following theorem makes precise in what sense $S$ necessarily carries information about the input whenever $P_{\mathrm{succ}}(\alpha)$ is not constant.

\begin{theorem}[Heralding reveals information about the input label]
\label{thm:mutual_info_success}
Let $\alpha \sim p_{\sigma}(\alpha)$ and $S \in \{0,1\}$ satisfy $\Pr(S=1 \mid \alpha)=P_{\mathrm{succ}}(\alpha)$ with $0 < P_{\mathrm{succ}} < 1$. Then, the mutual information between the input label and the success flag is
\begin{eqnarray}
I(\alpha;S) = h_{2}(P_{\mathrm{succ}}) - \int_{\C} d^2\alpha \, p_{\sigma}(\alpha) h_{2}\left(P_{\mathrm{succ}}(\alpha)\right),
\label{eq:I_alpha_S}
\end{eqnarray}
and it satisfies
\begin{eqnarray}
I(\alpha;S) \ge 0,
\end{eqnarray}
where $h_{2}(x):=-x\log x-(1-x)\log(1-x)$ is the binary entropy (logarithms are natural). Here, $I(\alpha;S)=0$ if and only if $P_{\mathrm{succ}}(\alpha)=P_{\mathrm{succ}}$ holds $p_{\sigma}$-almost surely.
\end{theorem}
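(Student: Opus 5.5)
The plan is to write $I(\alpha;S)=H(S)-H(S\mid\alpha)$, identify each term, and then get nonnegativity and the equality condition from strict concavity of the binary entropy.

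\emph{The identity.} Since $S$ is binary with $\Pr(S=1)=P_{\mathrm{succ}}$, we have $H(S)=h_{2}(P_{\mathrm{succ}})$. Conditional on $\alpha$, $S$ is Bernoulli with parameter $P_{\mathrm{succ}}(\alpha)$, so
\begin{eqnarray}
H(S\mid\alpha)=\int_{\C} d^{2}\alpha\, p_{\sigma}(\alpha)\, h_{2}\bigl(P_{\mathrm{succ}}(\alpha)\bigr).
\end{eqnarray}
Because $\alpha$ is continuous and $S$ is discrete, I will avoid differential entropies of $\alpha$. I take the definition of $I$ as the expected Kullback--Leibler divergence between $\Pr(S\mid\alpha)$ and $\Pr(S)$. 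Expanding the logarithm of the ratio then gives exactly Eq.~(\ref{eq:I_alpha_S}). All integrands are bounded by $\log 2$, so every integral is finite, and the convention $0\log 0=0$ handles the points where $P_{\mathrm{succ}}(\alpha)\in\{0,1\}$.

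\emph{Nonnegativity.} By Eq.~(\ref{eq:Ps_alpha_def_design}), $P_{\mathrm{succ}}=\mathbb{E}_{\sigma}[P_{\mathrm{succ}}(\alpha)]$. The function $h_{2}$ is concave on $[0,1]$, so Jensen's inequality gives
\begin{eqnarray}
\mathbb{E}_{\sigma}\bigl[h_{2}(P_{\mathrm{succ}}(\alpha))\bigr]\le h_{2}\bigl(\mathbb{E}_{\sigma}[P_{\mathrm{succ}}(\alpha)]\bigr)=h_{2}(P_{\mathrm{succ}}),
\end{eqnarray}
which is $I(\alpha;S)\ge 0$. Alternatively, nonnegativity follows directly from the KL representation.

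\emph{Equality condition.} I expect this step to need the most care. The ``if'' direction is immediate: if $P_{\mathrm{succ}}(\alpha)=P_{\mathrm{succ}}$ almost surely, both terms in Eq.~(\ref{eq:I_alpha_S}) coincide. For ``only if'' I will use strict concavity of $h_{2}$ on $[0,1]$ (its second derivative is $-1/[x(1-x)]<0$ on the interior, and $h_{2}$ is continuous at the endpoints). Equality in Jensen for a strictly concave function forces the random variable to be almost surely constant.

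To make this rigorous without appealing to the general theorem, I will argue via the tangent line at $P_{\mathrm{succ}}\in(0,1)$. Let
\begin{eqnarray}
\ell(x)=h_{2}(P_{\mathrm{succ}})+h_{2}'(P_{\mathrm{succ}})(x-P_{\mathrm{succ}}),
\end{eqnarray}
which is well defined precisely because $0<P_{\mathrm{succ}}<1$. Strict concavity gives $\ell(x)-h_{2}(x)\ge 0$, with equality only at $x=P_{\mathrm{succ}}$. Integrating against $p_{\sigma}$, the linear term vanishes, so
\begin{eqnarray}
I(\alpha;S)=\mathbb{E}_{\sigma}\bigl[\ell(P_{\mathrm{succ}}(\alpha))-h_{2}(P_{\mathrm{succ}}(\alpha))\bigr].
\end{eqnarray}
If $I(\alpha;S)=0$, this nonnegative integrand must vanish $p_{\sigma}$-almost surely, which yields $P_{\mathrm{succ}}(\alpha)=P_{\mathrm{succ}}$ almost surely. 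This also shows where the hypothesis $0<P_{\mathrm{succ}}<1$ is used.
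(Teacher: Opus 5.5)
Your proposal is correct and follows the paper's own route. Both write $I(\alpha;S)=H(S)-H(S\mid\alpha)$, identify the two terms as $h_{2}(P_{\mathrm{succ}})$ and $\mathbb{E}_{\sigma}[h_{2}(P_{\mathrm{succ}}(\alpha))]$, obtain nonnegativity from Jensen and concavity of $h_{2}$, and get the equality case from strict concavity. Your tangent-line argument at the interior point $P_{\mathrm{succ}}$ just makes that Jensen equality step explicit, and it also handles points where $P_{\mathrm{succ}}(\alpha)\in\{0,1\}$, which the paper's appeal to strict concavity on $(0,1)$ alone glosses over.
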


\begin{proof}---By definition of the mutual information~\cite{CoverThomas2006},
\begin{eqnarray}
I(\alpha;S) = H(S) - H(S \mid \alpha),
\end{eqnarray}
where $H$ denotes Shannon entropy. Since $S$ is Bernoulli with parameter $P_{\mathrm{succ}}$, we have $H(S)=h_{2}(P_{\mathrm{succ}})$. Conditioned on a fixed $\alpha$, $S$ is Bernoulli with parameter $P_{\mathrm{succ}}(\alpha)$, hence
\begin{eqnarray}
H(S \mid \alpha) = \int d^2\alpha \, p_{\sigma}(\alpha) h_{2}(P_{\mathrm{succ}}(\alpha)),
\end{eqnarray}
which gives Eq.~(\ref{eq:I_alpha_S}). The nonnegativity follows from the concavity of $h_{2}$ and Jensen's inequality~\cite{CoverThomas2006}:
\begin{eqnarray}
\int d^2\alpha \, p_{\sigma}(\alpha) h_{2}(P_{\mathrm{succ}}(\alpha)) &\le& h_{2}\!\left(\int d^2\alpha \, p_{\sigma}(\alpha) P_{\mathrm{succ}}(\alpha)\right) \nonumber \\
	&=& h_{2}(P_{\mathrm{succ}}).
\end{eqnarray}
If $0<P_{\mathrm{succ}}<1$, $h_{2}$ is strictly concave on $(0,1)$, so equality holds if and only if $P_{\mathrm{succ}}(\alpha)$ is constant $p_{\sigma}$-almost surely.
\end{proof}
 
{\bf Theorem~\ref{thm:mutual_info_success}} shows that any input-dependent success probability implies $I(\alpha;S)>0$. In other words, observing success-and-failure updates one's knowledge of $\alpha$. This is a purely classical statement about the input label and the heralding record, yet it captures the operational core of why conditional filters tend to induce selectivity. Indeed, the effective input distribution on successful runs is the Bayesian posterior
\begin{eqnarray}
p_{\sigma}^{(\mathrm{succ})}(\alpha) = p_{\sigma}(\alpha)\,\frac{P_{\mathrm{succ}}(\alpha)}{P_{\mathrm{succ}}},
\label{eq:posterior_prior_design}
\end{eqnarray}
which is exactly the distribution used in the conditional averages defining $(F, D)$ in Eq.~(\ref{eq:PDF_condi}). Thus, post-selection does not merely change the channel, but also changes the ensemble with respect to which ``average performance'' is operationally evaluated.

It is often useful to quantify the strength of this ensemble distortion by the relative entropy between the posterior and the prior.

\begin{proposition}[A selectivity-information functional for heralding]
\label{prop:KL_selectivity}
Define
\begin{eqnarray}
I_{\mathrm{sel}} &:=& D_{\mathrm{KL}}\left(p_{\sigma}^{(\mathrm{succ})} \,\|\, p_{\sigma}\right) \nonumber \\
	&=& \int_{\C} d^2\alpha \, p_{\sigma}^{(\mathrm{succ})}(\alpha)\log\frac{p_{\sigma}^{(\mathrm{succ})}(\alpha)}{p_{\sigma}(\alpha)}.
\label{eq:Isel_def}
\end{eqnarray}
Then,
\begin{eqnarray}
I_{\mathrm{sel}} \!=\! \int_{\C} d^2\alpha \, p_{\sigma}^{(\mathrm{succ})}(\alpha) \log P_{\mathrm{succ}}(\alpha) \!-\! \log P_{\mathrm{succ}} \ge 0,
\label{eq:Isel_closed}
\end{eqnarray}
with equality if and only if $P_{\mathrm{succ}}(\alpha)=P_{\mathrm{succ}}$ holds $p_{\sigma}$-almost surely.
\end{proposition}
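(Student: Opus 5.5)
The plan is to substitute the explicit posterior from Eq.~(\ref{eq:posterior_prior_design}) into the definition in Eq.~(\ref{eq:Isel_def}), and then invoke Gibbs' inequality for the sign and the equality case.

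\textbf{Closed form.} Where $p_{\sigma}^{(\mathrm{succ})}(\alpha)>0$, the likelihood ratio is
\begin{eqnarray}
\frac{p_{\sigma}^{(\mathrm{succ})}(\alpha)}{p_{\sigma}(\alpha)}=\frac{P_{\mathrm{succ}}(\alpha)}{P_{\mathrm{succ}}}.
\end{eqnarray}
Hence the logarithm in the integrand splits as $\log P_{\mathrm{succ}}(\alpha)-\log P_{\mathrm{succ}}$. The constant term integrates against the normalized density $p_{\sigma}^{(\mathrm{succ})}$ to give exactly $-\log P_{\mathrm{succ}}$, which yields the equality in Eq.~(\ref{eq:Isel_closed}).

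Two conventions need checking. First, points where $P_{\mathrm{succ}}(\alpha)=0$ carry zero posterior mass, so the convention $0\log 0=0$ makes both expressions agree there. Second, $P_{\mathrm{succ}}>0$ is needed for the posterior to be defined; this is implicit in the statement. Because $P_{\mathrm{succ}}(\alpha)\le 1$, the integrand $\log P_{\mathrm{succ}}(\alpha)$ is bounded above, so the integral is well defined in $[-\infty,0]$ and no integrability issue arises beyond the convention above.

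\textbf{Nonnegativity.} I will argue directly with Jensen's inequality rather than only cite Gibbs. Write $u(\alpha)=P_{\mathrm{succ}}/P_{\mathrm{succ}}(\alpha)$ on the support of the posterior. Then
\begin{eqnarray}
-I_{\mathrm{sel}}=\mathbb{E}_{p_{\sigma}^{(\mathrm{succ})}}[\log u].
\end{eqnarray}
By concavity of $\log$, this is at most $\log \mathbb{E}_{p_{\sigma}^{(\mathrm{succ})}}[u]$. The mean of $u$ is
\begin{eqnarray}
\mathbb{E}_{p_{\sigma}^{(\mathrm{succ})}}[u]=\int p_{\sigma}\,\mathbb{1}[P_{\mathrm{succ}}(\alpha)>0]\le 1 .
\end{eqnarray}
Therefore $I_{\mathrm{sel}}\ge 0$.

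\textbf{Equality case.} This is the step needing the most care. Suppose $I_{\mathrm{sel}}=0$. Then two things must hold:
\begin{itemize}
\item Jensen must be tight, and strict concavity of $\log$ forces $u$ to be constant posterior-almost surely.
\item The support indicator must integrate to $1$, so $P_{\mathrm{succ}}(\alpha)>0$ holds $p_{\sigma}$-almost surely.
\end{itemize}
Together these mean $P_{\mathrm{succ}}(\alpha)=c$ for $p_{\sigma}$-almost every $\alpha$, and averaging against $p_{\sigma}$ gives $c=P_{\mathrm{succ}}$.

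The converse is immediate: a constant $P_{\mathrm{succ}}(\alpha)$ makes the posterior equal to the prior, so $I_{\mathrm{sel}}=0$.

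The main obstacle is precisely the measure-theoretic bookkeeping in this last step. One must transfer "almost surely under the posterior" to "almost surely under the prior", which is exactly what the support condition handles.
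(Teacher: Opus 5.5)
Your proof is correct and takes the same route as the paper: substitute the posterior $p_{\sigma}^{(\mathrm{succ})}=p_{\sigma}P_{\mathrm{succ}}(\alpha)/P_{\mathrm{succ}}$ into the definition to get the closed form, then use Gibbs' inequality for the sign and the equality case. The paper simply cites the standard properties of relative entropy (equality iff $p_{\sigma}^{(\mathrm{succ})}=p_{\sigma}$, i.e.\ $P_{\mathrm{succ}}(\alpha)$ constant), whereas you re-derive them via Jensen with explicit support bookkeeping, which is a valid self-contained version of the same argument.
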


\begin{proof}---Substituting Eq.~(\ref{eq:posterior_prior_design}) into Eq.~(\ref{eq:Isel_def}), we obtain
\begin{eqnarray}
I_{\mathrm{sel}} &=& \int d^2\alpha \, p_{\sigma}^{(\mathrm{succ})}(\alpha) \log\left(\frac{P_{\mathrm{succ}}(\alpha)}{P_{\mathrm{succ}}}\right) \nonumber \\
	&=& \int d^2\alpha \, p_{\sigma}^{(\mathrm{succ})}(\alpha)\log P_{\mathrm{succ}}(\alpha) - \log P_{\mathrm{succ}},
\end{eqnarray}
which is Eq.~(\ref{eq:Isel_closed}). Nonnegativity and the equality condition are standard properties of the relative entropy, and here reduce to the condition that $p_{\sigma}^{(\mathrm{succ})}=p_{\sigma}$ almost surely, i.e., $P_{\mathrm{succ}}(\alpha)$ is constant.
\end{proof}

{\bf Proposition~\ref{prop:KL_selectivity}} supplies a compact diagnostic that is logically upstream of the deviation $D$. The functional $I_{\mathrm{sel}}$ captures the extent to which the success event concentrates the input ensemble (in an information-theoretic sense), whereas $D$ captures how unevenly the teleportation fidelity is distributed under that effective ensemble. In MB-NLA, the cut-off filter makes $P_{\mathrm{succ}}(\alpha)$ strongly amplitude dependent, hence $I_{\mathrm{sel}}$ becomes nonzero. When, in addition, the conditional fidelity profile $f_{\mathrm{succ}}(\alpha)$ is itself amplitude dependent (Sec.~\ref{sec:mbnla}), the variance typically grows. This is the precise sense in which ``more information revealed'' by conditioning translates into the universality loss: the act of selecting events based on a record correlated with $\alpha$ both distorts the input distribution and induces an $\alpha$-dependent fidelity landscape, thereby producing $D>0$ in the regimes that improve $F$.

\subsection{Design principles: robust objectives beyond mean fidelity}\label{subsec:robust_objectives}

The preceding discussion motivates a shift from optimizing mean performance alone to optimizing mean performance with a controlled loss of universality. Operationally, $D$ matters because the teleportation is often used as a subroutine, where the rare low-fidelity events can dominate downstream error budgets~\cite{Larsen2021,Hillmann2022}. A useful design goal is therefore to guarantee that ``most'' successful events exceed a desired fidelity threshold, rather than merely ensuring a high mean.

To formalize this idea, let $X \in [0,1]$ denote the random single-shot fidelity conditioned on the success, with
\begin{eqnarray}
\mathbb{E}[X]=F, \quad \sqrt{\Var(X)}=D,
\end{eqnarray}
where the randomness is over the input drawn from $p_{\sigma}^{(\mathrm{succ})}$. The following theorem shows that the linear combination $F - \lambda D$ is not an ad hoc scalarization: it is a certified lower-confidence bound on $X$.

\begin{theorem}[A robust fidelity guarantee from $(F,D)$]
\label{thm:Cantelli_Jlambda}
Let $X \in [0,1]$ be a random variable with mean $F$ and standard deviation $D$. Then, for any $\lambda > 0$,
\begin{eqnarray}
\Pr\left[X \ge F - \lambda D\right] \ge \frac{\lambda^{2}}{1+\lambda^{2}}.
\label{eq:Cantelli_bound_lambda}
\end{eqnarray}
Equivalently, for any target confidence level $\eta\in(0,1)$,
\begin{eqnarray}
\Pr\left[X \ge F - \sqrt{\frac{\eta}{1-\eta}} D \right] \ge \eta.
\label{eq:Cantelli_bound_eta}
\end{eqnarray}
\end{theorem}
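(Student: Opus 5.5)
The plan is to prove this via the one-sided Chebyshev (Cantelli) inequality, applied to the lower tail of $X$. Boundedness $X\in[0,1]$ is used only to guarantee that the mean $F$ and variance $D^{2}$ exist. First I dispose of the degenerate case $D=0$: then $X=F$ almost surely, so $\Pr[X\ge F]=1$, which exceeds $\lambda^{2}/(1+\lambda^{2})$. Assume henceforth $D>0$, and set $Y:=F-X$, which has $\mathbb{E}[Y]=0$ and $\Var(Y)=D^{2}$. The event $\{X<F-\lambda D\}$ is exactly $\{Y>t\}$ with $t:=\lambda D>0$. It therefore suffices to show
\begin{eqnarray}
\Pr[Y\ge t]\le \frac{D^{2}}{D^{2}+t^{2}},
\end{eqnarray}
because then $\Pr[X\ge F-\lambda D]\ge 1-\frac{D^{2}}{D^{2}+\lambda^{2}D^{2}}=\frac{\lambda^{2}}{1+\lambda^{2}}$, which is Eq.~(\ref{eq:Cantelli_bound_lambda}).

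For the key step I use the shift trick. For any $u\ge 0$, $\{Y\ge t\}\subseteq\{(Y+u)^{2}\ge (t+u)^{2}\}$, since $t+u>0$. Markov's inequality applied to $(Y+u)^{2}$ gives
\begin{eqnarray}
\Pr[Y\ge t]\le \frac{\mathbb{E}[(Y+u)^{2}]}{(t+u)^{2}}=\frac{D^{2}+u^{2}}{(t+u)^{2}},
\end{eqnarray}
where I used $\mathbb{E}[Y]=0$. Minimizing over $u$, the optimum is at $u=D^{2}/t$, and substituting yields $\frac{D^{2}(1+D^{2}/t^{2})}{(t+D^{2}/t)^{2}}=\frac{D^{2}}{t^{2}+D^{2}}$. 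This choice of $u$ is the only genuinely non-routine point of the proof. A direct use of the two-sided Chebyshev inequality, as in Theorem~\ref{thm:throughput_bound}, would only give the weaker $1-1/\lambda^{2}$, and the shift is what sharpens it to the one-sided form.

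Finally, for the $\eta$-form, given $\eta\in(0,1)$ I choose $\lambda=\sqrt{\eta/(1-\eta)}>0$. Then $\lambda^{2}/(1+\lambda^{2})=\eta$, and Eq.~(\ref{eq:Cantelli_bound_eta}) follows by substitution. Conversely, every $\lambda>0$ arises this way from $\eta=\lambda^{2}/(1+\lambda^{2})\in(0,1)$, which establishes the stated equivalence of the two formulations.
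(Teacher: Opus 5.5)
Your proposal is correct and follows the paper's route: the paper invokes Cantelli's inequality with $t=\lambda D$ and then sets $\lambda=\sqrt{\eta/(1-\eta)}$. You go further by deriving Cantelli from Markov via the optimal shift $u=D^{2}/t$ and by treating $D=0$ separately, a case the paper's substitution $t=\lambda D>0$ tacitly excludes.
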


\begin{proof}---We apply the one-sided Chebyshev (Cantelli) inequality to $X$~\cite{Wainwright2019}:
\begin{eqnarray}
\Pr\left[X-F \le - t\right] \le \frac{D^{2}}{D^{2}+t^{2}} \quad (t>0).
\label{eq:Cantelli_raw}
\end{eqnarray}
By setting $t=\lambda D$, we have
\begin{eqnarray}
\Pr\left[X < F - \lambda D\right] \le \frac{1}{1+\lambda^{2}},
\end{eqnarray}
which is equivalent to Eq.~(\ref{eq:Cantelli_bound_lambda}). Eq.~(\ref{eq:Cantelli_bound_eta}) follows by choosing $\lambda=\sqrt{\eta/(1-\eta)}$.
\end{proof}

{\bf Theorem~\ref{thm:Cantelli_Jlambda}} provides a direct interpretation of the robust objective
\begin{eqnarray}
J_{\lambda} := F - \lambda D.
\label{eq:Jlambda_def}
\end{eqnarray}
Maximizing $J_{\lambda}$ maximizes a fidelity threshold that is guaranteed to be exceeded by at least a fraction $\lambda^{2}/(1+\lambda^{2})$ of successful runs, irrespective of the detailed shape of the fidelity distribution. Thus, $\lambda$ is not merely a Lagrange multiplier; it can be chosen operationally according to the desired confidence. For example, demanding that at least $90\%$ of successful runs exceed a certified threshold corresponds to $\lambda=\sqrt{0.9/0.1}=3$.

When heralding is present, one can combine {\bf Theorem~\ref{thm:Cantelli_Jlambda}} with the ensemble success probability $P_{\mathrm{succ}}$. Let $Y$ denote the event that a given trial is successful and has fidelity at least $J_{\lambda}$. Then,
\begin{eqnarray}
\Pr(Y) = P_{\mathrm{succ}} \Pr\left[X\ge J_{\lambda}\right] \ge P_{\mathrm{succ}}\frac{\lambda^{2}}{1+\lambda^{2}}.
\label{eq:throughput_certified}
\end{eqnarray}
Eq.~(\ref{eq:throughput_certified}) makes explicit why the triple $(F, D, P_{\mathrm{succ}})$ is the natural design space: the post-selected improvements may raise $F$ but can simultaneously increase $D$ and suppress $P_{\mathrm{succ}}$, thereby reducing the rate of producing reliably good outputs.

These considerations suggest the following robust design principles for the conditional CV teleportation. Rather than maximizing $F$ alone, one can (i) maximize $J_{\lambda}$ in Eq.~(\ref{eq:Jlambda_def}) for a chosen confidence parameter $\lambda$, or (ii) solve constrained optimizations of the form
\begin{eqnarray}
\max F ~~\text{subject to}~~ (D \le D_{\max}, ~P_{\mathrm{succ}} \ge P_{\min}),
\label{eq:constrained_design}
\end{eqnarray}
depending on whether the application is limited primarily by worst-case reliability (a $D$ budget), by repetition overhead (a $P_{\mathrm{succ}}$ budget), or by both. In teleportation-based networks, the constraints of the form in Eq.~(\ref{eq:constrained_design}) are often more faithful to operational needs than any mean-only benchmarks: the repeated use of teleportation can tolerate a moderate drop in $F$ if it dramatically reduces the fraction of anomalously poor instances, i.e., if it reduces $D$.

In the MB-NLA setting of Sec.~\ref{sec:mbnla}, the filter parameters (gain $g$ and cut-off $m_{c}$) provide precisely the knobs that tune this robustness trade-off. The large cut-offs make the protocol closer to the deterministic BK baseline, maintaining universality (small $D$), but limiting conditional improvements. The tight cut-offs can raise $F$ near the origin, but at the cost of increased selectivity, as certified by our earlier results ({\bf Theorem~\ref{thm:local_universality_cost}} and {\bf Proposition~\ref{prop:cutoff_selectivity}}) and captured quantitatively by the degradation of $J_{\lambda}$ at large $\lambda$. From this perspective, the purpose of $D$ is not to penalize the non-Gaussianity, but to calibrate how much the apparent improvement in $F$ is driven by input-selective filtering rather than by a uniform enhancement of the effective teleportation channel.

\section{Conclusion}\label{sec:conclusion}

In this work, we have argued that, for the coherent-state benchmarking of CV teleportation, the fidelity deviation $D$ should be read primarily as a universality (input-uniformity) diagnostic, and we made this statement precise by proving that displacement covariance of the effective teleportation channel forces the single-shot fidelity profile to be flat and hence $D=0$ ({\bf Theorem~\ref{thm:covariance_implies_D0}}). We then established a complementary quantitative statement: whenever an apparent fidelity gain is generated through a nonuniform first-order response across the input ensemble, the deviation necessarily increases and, in the perturbative regime, grows linearly with the fidelity gain with a slope fixed by the variance of the selectivity profile ({\bf Theorem~\ref{thm:local_universality_cost}}). Together with the throughput bound in {\bf Theorem~\ref{thm:throughput_bound}}, this identifies $D$ as the price paid for the conditional improvement when one demands reliably good outputs rather than a high mean alone. Applying the framework to MB-NLA-enhanced teleportation~\cite{Chrzanowski2014,Zhao2023,Fiurasek2024}, we explicitly mapped the trade-off geometry in $(F, D, P_{\mathrm{succ}})$ and, by contrasting it with deterministic BK teleportation using non-Gaussian resources, such as photon-subtracted/added entanglement~\cite{Opatrny2000,DellAnno2007,Patra2022}, we clarified that it is a conditional filtering---not resource non-Gaussianity by itself---that generically drives the universality cost.

Beyond the particular case study, the overall message is methodological. Average fidelity $F$ remains an essential benchmark for unknown inputs, but it is incomplete whenever teleportation is used as a composable primitive. A protocol can increase $F$ by selectively concentrating good events and yet become less reliable across the ensemble, which is precisely what $D$ captures. From this perspective, the purpose of introducing $D$ is not to penalize the non-Gaussianity, but to separate ``uniform channel improvement'' from ``input-selective post-selection advantage'' within a single operational diagram. This separation also suggests a natural optimization principle for probabilistic enhancements: rather than tuning the parameters solely to maximize $F$, one can optimize robustness-aware objectives, such as $J_{\lambda} = F - \lambda D$, or equivalently maximize $F$ subject to explicit universality and throughput constraints $(D \le D_{\max}, P_{\mathrm{succ}} \ge P_{\min})$, depending on the architecture-level requirements.

Several extensions follow naturally. First, while we focused on the coherent-state ensembles with a Gaussian prior as the cleanest setting in which displacement covariance translates directly into universality, the same symmetry-first logic can be ported to other families of inputs (e.g., squeezed ensembles) by identifying the relevant group action and the corresponding notion of covariance. Second, the present framework invites a principled experimental reporting standard for conditional teleportation: alongside $(F, P_{\mathrm{succ}})$, one should report $D$ (or equivalently a selectivity index) to quantify how much of the improvement is driven by ensemble distortion, thereby making the ``universality cost'' visible in a device- and application-relevant way. We expect that these refinements will be particularly valuable in the networked and modular CV architectures, where the teleportation is repeatedly invoked and where rare but strongly input-dependent failures can dominate the end-to-end behavior.

\section*{Acknowledgments}

This work was supported by the National Research Foundation of Korea through grants funded by the Ministry of Science, ICT and Future Planning (MSIP) (RS-2024-00432214, RS-2025-03532992, and RS-2025-18362970); by the Institute of Information and Communications Technology Planning and Evaluation through a grant funded by the Korean government (RS-2019-II190003, ``Research and Development of Core Technologies for Programming, Running, Implementing and Validating of Fault-Tolerant Quantum Computing System''). This work is also supported by the Grant No.~K25L5M2C2 at the Korea Institute of Science and Technology Information (KISTI). We acknowledge the Yonsei University Quantum Computing Project Group for providing support and access to the Quantum System One (Eagle Processor), which is operated at Yonsei University.


\bibliography{CV_nongaussian_fidelD}

\end{document}